\newcommand{\xhdr}[1]{\vspace{1.7mm}\noindent{{\bf #1.}}}
\newcommand\blfootnote[1]{%
  \begingroup
  \renewcommand\thefootnote{}\footnote{#1}%
  \addtocounter{footnote}{-1}%
  \endgroup
}
\begin{document}

\title{Smart broadcasting: Do you want to be seen?}

\author[1]{Mohammad Reza Karimi$^{*}$}
\author[1]{Erfan Tavakoli$^{*}$}
\author[2]{Mehrdad Farajtabar}
\author[2]{\\Le Song}
\author[3]{Manuel Gomez-Rodriguez}

\affil[1]{Sharif University, mkarimi@ce.sharif.edu, erfan.tavakoli71@gmail.com}
\affil[2]{Georgia Institute of Technology, mehrdad@gatech.edu, lsong@cc.gatech.edu}
\affil[3]{Max Planck Institute for Software Systems, manuelgr@mpi-sws.org}

\date{}

\begin{small}
\maketitle
\end{small}

\begin{abstract}
Many users in online social networks are constantly trying to gain attention from their followers by \emph{broadcasting} posts to them. 
These broadcasters are likely to gain greater attention if their posts can remain visible for a longer period of time among their followers' most recent feeds. Then when to post? 
%
In this paper, we study the problem of smart broad\-cas\-ting using the framework of temporal point processes, where we model users feeds and posts as discrete events occu\-rring in continuous time. Based on such continuous-time model, then choosing a broadcasting strategy for a user becomes a problem of designing the conditional intensity of her posting events. 
We derive a novel formula which links this conditional intensity with the ``visibility'' of the user in her followers' feeds. Furthermore, by exploiting this formula, we develop an efficient convex optimization framework for the ``when-to-post'' problem. Our method can find broadcasting strategies that reach a desired ``visibility'' level with provable guarantees. We experimented with data ga\-the\-red from Twitter, and show that our framework can consistently make broadcasters' post more visible than alternatives.
\end{abstract}

\blfootnote{$^{*}$\scriptsize Authors contributed equally. This work was done during the authors'{} internships at Max Planck
for Software Systems.}

\vspace{-1mm}
\section{Introduction}
\label{sec:intro}
The popularization of social media and online social networking has empowered political parties, small and large corporations, celebrities, 
as well as ordinary people, with a platform to build, reach and broadcast information to their own audience. 
For example, political leaders use social media to present their character and personalize their message in hopes of tapping
younger voters\footnote{\scriptsize \url{http://www.nytimes.com/2012/10/08/technology/campaigns-use-social-media-to-lure-younger-voters.html}}; 
corporations increasingly rely on social media for a variety of tasks, from brand awareness to marketing and customer 
service~\cite{constantinides2014foundations};
celebrities leverage social media to bring awareness to themselves and strengthen their fans'{} loyalty\footnote{\scriptsize \url{http://www.wsj.com/articles/what-celebrities-can-teach-companies-about-social-media-1444788220}}; 
and, ordinary people post about their lives and express their opinions to gain recognition from a mix of close friends and acquaintances\footnote{\scriptsize \url{http://www.pewinternet.org/topics/social-networking/}}.
However, social media users often \emph{follow} hundreds of \emph{broadcasters}, and they often receive information at a rate far higher than their cognitive abilities to process it~\cite{gomez14icwsm}. This also means that many broadcasters actually share quite a portion of their followers, and they are constantly competing for attention from these followers.

In this context, these followers'{} attention becomes a scarce commodity of great value~\cite{crawford2015world}, 
and broadcasters would like to consume a good share of it so that their posted contents are \emph{noticed} and possibly liked or shared. 
As a consequence, there are myriads of articles and blog entries about the \emph{best times} to broadcast information in social media
and social networking, as well as data analytics tools to find these times\footnote{\scriptsize \url{http://www.huffingtonpost.com/catriona-pollard/the-best-times-to-post-on\_b\_6990376.html}}\footnote{\scriptsize \url{http://blog.klout.com/2015/07/whens-the-best-time-to-post-on-social/}}.
%
%
%
However, the best time to post on social media depends on a variety of factors, often specific to the broadcaster in question, such as their followers'{} daily and weekly behavior patterns, their location or timezone, and 
the number of broadcasters and volume of information competing for their attention in these followers' feeds (be it in the form of a Twitter user'{}s timeline, a Facebook user'{}s wall or an Instagram user'{}s feed).
Therefore, the problem of finding the best times to broadcast messages and elicit attention (be it views, likes or shares), in short, the \emph{when-to-post} problem, requires careful reasoning and smart algorithms, which have been largely inexistent until very recently~\cite{spasojevic2015post}.

In this paper, we develop a novel framework for the when-to-post problem, where we measure the gained attention or \emph{visibility} of a broadcaster as the time that at least one post from her is among the most recent $k$ received stories in her followers'{} feed. A desirable property of this time based vi\-si\-bi\-li\-ty measure is that it is easy to estimate from real data. In order to measure the achieved visibility for a particular deployed broadcasting strategy, one only need to use a separate held-out set of the followers'{} feeds, independently of the broadcasted content. 
This is in contrast to other measures based on,~\eg,~the number of likes or shares caused by a broadcasting strategy. These latter measures are difficult to estimate from real data and often require actual interventions, since they depend on other confounding factors such as the follower'{}s 
reaction to the post content~\cite{chenhao14}, whose effect is difficult to model accurately~\cite{cheng2014can}.

More specifically, we will model users'{} feeds and posts as discrete events occurring in continuous time using the framework of temporal point processes. Our model explicitly characterize the continuous time interval between posts by means of conditional intensity functions~\cite{AalBorGje08}. 
Based on such continuous-time model, then choosing a strategy for a broadcaster becomes a problem of designing
the conditional intensity of her posting events. We derive a novel formula which can link the conditional intensity of an \emph{arbitrary} broadcaster with her visibility in her followers' feeds. Interestingly, 
we can show that the average visibility is concave in the space of (piece-wise) smooth intensity functions. 
Based on this result, we propose a convex optimization framework to address\- a diverse range of visibility sha\-ping tasks given bud\-get constraints. 
Our framework allows us to conduct fine-grained control of a broadcaster's visibility across her followers. For instance, our framework can steer the visibility in such a way that some time intervals are favored over others, \eg, times when the broad\-cas\-ters'{} follo\-wers are on-line.
%
%
In addition to the novel framework, we develop an efficient gradient based optimization algorithm, which allows us to find optimal broadcast intensities for a variety of visibility shaping tasks in
a matter of milliseconds.
Finally, we experimented on a large real-world dataset ga\-thered from Twitter dataset, and show that our framework can consistently make broadcasters' posts more visible than 
alternatives.




\xhdr{Related work}
The work most closely related to ours is by Spasojevic et al.~\cite{spasojevic2015post}, who introduced the when-to-post 
pro\-blem. In their work, they first perform an empirical study on the best times to post in Twitter and Facebook by ana\-lyzing more 
than a billion messages and responses. 
Then, they design several heuristics to (independently) pinpoint at the times that elicited the greatest number of responses 
in a training set and then show that these times also lead to more responses in a held-out set.
In our work, we measure attention by means of visibility, a measure that is not confounded with the message content and can
be accurately evaluated on a held-out set, and then develop a convex optimization framework to design complete broadcasting 
strategies that are \emph{provably} optimal.

There have been an increasing number of empirical stu\-dies on understanding attention and information overload on social and information 
networks~\cite{backstrom2011center,hodas2012visibility,miritello2013limited,gomez14icwsm}.
The common\- theme is to investigate whether there is a limit on the amount of ties (\eg, friends, followees or phone contacts) people 
can maintain, how people distribute attention across them, and how attention influences the propagation of information.
In contrast, in this work, we focus on optimizing a social media user'{}s broadcasting strategy to capture the greatest attention from 
their followers.

Our work also relates to the influence maximization pro\-blem, extensively studied in recent years~\cite{RicDom02,KemKleTar03,CheWanWan2010,du13nips},
which aims to find a set of nodes in a social network whose initial adoptions of certain idea or product can trigger the largest expected number of follow-ups.
In this line of work, the goal is finding these influential users but not to find the best times for these users to broadcast their messages, which is 
our goal here.
Only very recently, Farajtabar et al.~\cite{shaping14nips} have developed a convex optimization framework to find broadcasting strategies, however, their focus
is on \-steering the overall activity in the network to a certain state by incentivizing a few influential users, in contrast, we focus on maximizing visibility 
as measured on a broadcaster'{}s audience'{}s feeds.

Finally, the framework of temporal point processes, which our work builds upon, has been increasingly used to model a wide range of
phenomena in social media and social networking sites, \eg, from social influence~\cite{shaping14nips}, network evolution~\cite{farajtabar2015coevolve}, 
opinion dynamics~\cite{de2015modeling} or product competition~\cite{competing15icdm}.

\vspace{-1mm}
\section{Background on Point Processes}
\label{sec:backgrond}
A temporal point process is a stochastic process whose rea\-li\-za\-tion consists of a list of discrete events localized in time, $\cbr{t_i}$ with $t_i \in \RR^+$ and $i \in \ZZ^+$. Many 
different types of data produced in online social networks can be represented as temporal point processes, such as the times of tweets, retweets or likes in Twitter. A temporal 
point process can be equivalently re\-pre\-sen\-ted as a counting process, $N(t)$, which records the number of events before time $t$. Then, in a infinitesimally small time window 
$dt$ around time $t$, the number of observed event is 
\begin{align}
  \label{eq:count_change}
  d N(t) = \sum_{t_i \in \Hcal(t)} \delta(t-t_i) \, dt,  
\end{align}
and hence $N(t) = \int_0^t dN(s)$, where $\delta(t)$ is a Dirac delta func\-tion. It is often assumed that only one event can happen in a small window of size $dt$, and hence $dN(t) \in \cbr{0,1}$. 

An important way to characterize temporal point processes is via the intensity function --- the stochastic model for the time of the next event given all the times of pre\-vious events. The intensity function $\lambda(t)$ (intensity, for short) is the probability of observing an event in a small window $[t, t+dt)$, \ie,
\begin{align}
  \label{eq:intensity}
  \lambda(t)dt = \PP\cbr{\text{event in $[t, t+dt)$}}.
\end{align}
Based on the intensity, one can obtain the expectation of the number of events in the windows $[t,t+dt)$ 
and $[0,t)$ respectively as
\begin{equation}
  \EE[dN(t)] = \lambda(t)\, dt~\text{and}~
  \EE[N(t)] = \int_0^t \lambda(\tau)\, d \tau
\end{equation}
There is a wide variety of functional forms for the intensity $\lambda(t)$ in the growing literature on social activity modeling using\- point processes,
which are often designed to capture the phenomena of interests. 
For example, retweets have been modeled using multidimensional Hawkes processes~\cite{shaping14nips,zhao2015seismic}, new 
network links have been predicted using survival processes~\cite{vu2011continuous,farajtabar2015coevolve}, and daily and weekly variations on message broadcasting 
intensities have been captured using inhomogeneous Poisson processes~\cite{navaroli2015modeling}.

In this work, since we are interested on optimizing message broadcasting intensities, we use inhomogeneous Poisson processes, whose intensity is a time-varying 
function $\lambda(t) = g(t) \geqslant 0$. 

%
%

\vspace{-1mm}
\section{From Intensities to Visibility}
\label{sec:visibility}
In this section, we will present our model for the pos\-ting times of broadcasters and the feed story arrival times of followers using point processes parameterized by intensity functions. Based on these models, we will then define our visibility measure, and derive a novel link between the vi\-si\-bi\-li\-ty measure and the intensity functions of a broadcaster and her followers. 

\xhdr{Representation of broadcast and feed}
Given a directed social network $\Gcal = (\Vcal, \Ecal)$ with $m = |\Vcal|$ users, we assume that each user can be both broadcaster and follower. Then, we will use two sets of counting processes to modeling each 
user'{}s activity, the first set for the user'{}s broadcasting activity, and the second set for the user'{}s feed activity. 

More specifically, we represent the broadcasting times of the users as a set of counting processes denoted by a vector $\Nb(t)$, in which the $u$-th dimension, $N_u(t) \in \{0\} \cup \ZZ^+$, counts 
the number of messages user $u$ broadcasted up to but not including time $t$.
Then, we can characterize the message rate of these users using their corresponding intensities
\begin{equation} \label{eq:broadcast-intensity}
\EE[d\Nb(t)] = \lambdab(t) \, dt.
\end{equation}

Furthermore, given the adjacency matrix $\Ab \in \cbr{0, 1}^{m \times m}$ corresponding to the social network $\Gcal$, where $A_{uv} = 1$ indicates that $v$ follows $u$, and $A_{uv} = 0$ otherwise, 
we can represent the feed story arrival times of the users as a sum of the set of broadcasting counting processes. That is 
\begin{align}
  \Mb(t) = \Ab^{T} \Nb(t),
\end{align}
which essentially aggregates for each user the counting processes of the broadcasters followed by this user. 
Then, we can characterize the feed rates using intensity functions
\begin{equation} \label{eq:feed-intensity}
\EE[d\Mb(t)] = \gammab(t)  \, dt,
\end{equation}
where $\gammab(t) := \Ab^{T} \lambdab(t) = (\gamma_1, \ldots, \gamma_{m})^{T}$.

Finally, from the perspective of a pair of broadcaster (or user) $u$ and her follower $v$, it is useful to define the feed rate of $v$ due to 
other broadcasters (or users) followed by $v$ as
\begin{equation} \label{eq:feed-intensity-broadcaster}
\gamma_{v\setminus u}(t) := \gamma_v(t) - \lambda_u(t),
\end{equation}
where we assume $\gamma_{v\setminus u}(t) := 0$ if $v$ does not follow $u$, $A_{uv} = 0$.

\xhdr{Definition of Visibility} 
%
Consider a broadcaster $u$ and her follower $v$, and we note that $v$ may follow many other broadcasters other than $u$. Thus, at any time $t$, user $v$ may see stories originated from multiple broadcasters. We can model the times and origins of all these stories present in $v$'s current feed as a first-in-first-out (FIFO) queue\footnote{\scriptsize In this work, we assume the social network sorts stories in each user'{}s feed in inverse chronological order.} of pairs 
\begin{equation}
  \Hcal_v(t) := \left \{ (t_{(i)}, u_{(i)}) : \right.
  \left. t\geqslant t_{(1)}\geqslant\ldots\geqslant t_{(I-1)}\geqslant t_{(I)},\, u_{(i)} \in \Ncal^{-}(v) \right \}, \nonumber
\end{equation}
%
where $\cdot_{(i)}$ denotes the $i$-th element in the queue, $t_{(i)}$ is the time when $v$ receives a story from broadcaster $u_{(i)}$, $\Ncal^{-}(v)$ denotes the set of broadcasters followed by $v$, and $I$ is the length of the queue. The length $I$ accounts for the fact that online social platforms typically set a maximum number of stories that can be displayed in the feed, \eg, currently Twitter has $I=20$. 
%
%
The FIFO queue is to model the fact that when a new story arrives, the oldest story, $(t_{(I)},u_{(I)})$, at the bottom of the feed will be removed, and the ordering of the remaining stories will be shifted down by one slot,~\ie,
$$
  i+1 \leftarrow i, \quad \forall i = 1,\ldots, I-1
$$
and the newly arrived story will be appended to the be\-ginning of the queue as $t_{(1)}$ and appear at the top of the feed. For simplicity, we assume that the queue is always full at the time of modeling. 

In the list $\Hcal_v(t)$, we keep track of the rank $r_{uv}(t)$ of the most recent story posted by the broadcaster $u$ among all the stories received by user $v$ 
by time $t$, \ie,
\begin{equation}
r_{uv}(t) = \min \cbr{i~:~u_{(i)} = u}.
\end{equation}
Then, given an observation time window $[0, T]$, and a deterministic sequence of broadcasting events, we can define the deterministic visibility of broadcaster $u$ at $k$ with respect to follower $v$ as
\begin{equation} \label{eq:visibility}
\Tcal_{uv}(k) := \int_0^T \II[r_{uv}(t) \leqslant k] \, dt,
\end{equation}
which is the amount of times that at least one story from broadcaster $u$ is among the most recent $k$ stories in user $v$'s feed. 
%
%

Since the sequence of broadcasting events are generated from stochastic processes, we will consider the expected value of $\Tcal_{uv}(k)$ instead. If we first denote the probability that at least one story 
from broadcaster $u$ is among the $k$ most recent stories in follower $v$'s feed as
\begin{equation}
f_{uv}(t,k) = \PP\cbr{r_{uv}(t) \leqslant k}, 
\end{equation}
then the expected (or average) visibility $\Vcal(k)$ can be defined as
\begin{equation} \label{eq:average-visibility}
  \Vcal_{uv}(k) := \EE \sbr{\Tcal_{uv}(k)} = \int_0^T f_{uv}(t,k) \, dt,
\end{equation}
given the integral is well-defined. 
In some scenarios, one may like to favor some periods of times (\eg, times in which the follower is online), encode such preference
by means of a time significance function $s(t) \geqslant 0$ and consider $f_{uv}(t,k) s(t)$ instead of just $f_{uv}(t,k)$.

Note that the visibility $\Vcal_{uv}(k)$ is defined for a pair of broadcaster $u$ and her follower $v$ given $k$. We will focus our later exposition on a particular of $u$ and $v$, and omit the subscript $\cdot_{uv}$ and simply use notation such as $f_k(t)$, $\Vcal(k)$. However, we note that the computation of the visibility for a pair of users $u$ and $v$ may depend on the broadcast and feed intensities of all users in the network. 

\xhdr{Computation of Visibility}
In this section, we derive an expression for the average visibility, given by Eq.~\ref{eq:average-visibility}, using the broadcaster posting and follower feed representation,
given by Eqs.~\ref{eq:broadcast-intensity}-\ref{eq:feed-intensity-broadcaster}. This link is crucial for the convex visibility shaping framework in Section~\ref{sec:formulation}.

Given a broadcaster $u$ with $\lambda_u(t) = \lambda(t)$ and her follower $v$ with $\gamma_v(t) = \gamma(t)$
and $\gamma_{v\setminus u}(t) = \mu(t)$, we first compute the probability $f_1(t)$ that at least one message from the broadcaster is among the $k=1$ most recent ones 
received by $v$ at time $t$.
By definition, one can easily realize that $f_1(t)$ satisfies the following equation:
\begin{equation}
f_1(t+ dt) =   \underbrace{f_1(t) \, (1- \mu(t)  dt)}_{\text{\scriptsize 1. Remains the most recent}} +  \underbrace{ (1-f_1(t)) \, \lambda(t) dt,}_{\text{2. Becomes the most recent}}
\end{equation}
where each term models one of the two possible situations:
\begin{compactitem}
\item[1.] The most recent message received by follower $v$ by time $t$ was posted by broadcaster $u$ (\wpr~$f_1(t)$) and none of the other broadcasters that $v$ follows posts a message 
in $[t, t+dt]$  (\wpr~$1 - \mu(t) dt$).
\item[2.] The most recent message received by follower $v$ by time $t$ was posted by a different broadcaster (\wpr~$1-f_1(t)$) and 
broadcaster $u$ posts a message in $[t, t+dt]$ (\wpr~$\lambda(t) dt$) which becomes the most recent one.
\end{compactitem}
Then, by rearranging terms and letting $dt \to 0$, one finds that the probability satisfies the following differential equation:
%
%
\begin{align}
\label{eq:ode-for-f}
f_1'(t) = -(\mu(t)+\lambda(t)) f_1(t) +  \lambda(t).
\end{align}

We can proceed with the induction step for $f_k(t)$ with $k > 1$. In particular, by definition, $f_k(t)$  satisfies the following equation:
\begin{equation}
 f_k(t+dt) =  \underbrace{f_{k-1}(t)}_{\text{1. Was among $k$$-$$1$}} + \underbrace{(f_{k}(t)-f_{k-1}(t)) (1- \mu(t) dt)}_{\text{2. Remains on the $k$-th position}} 
 + \underbrace{(1-f_k(t)) \lambda(t) dt,}_{\text{3. Becomes the most recent}}
\end{equation}
where each term models one of the three possible situations:
\begin{compactitem}
\item[1.] The last message posted by broadcaster $u$ by time $t$ is among the most recent $k-1$ ones received by follower $v$ (\wpr~$f_{k-1}(t)$) and, independent of 
whether a message is posted by any other broadcaster or not, this message will remain among the most recent $k$ at $t+dt$.
\item[2.] The last message posted by broadcaster $u$ by time $t$ is the $k$-th one (\wpr~($f_{k}(t)-f_{k-1}(t))$) and none of the other broadcasters followed by $v$ posts a message 
in $[t, t+dt]$  (\wpr~$1 - \mu(t) dt$)

\item[3.] The last $k$ messages received by follower $v$ by time $t$ were posted by other broadcasters (\wpr~$1-f_k(t)$) and broadcaster $u$ posts a message in 
$[t, t+dt]$ (\wpr~$\lambda(t) dt$), becoming the most recent one.
\end{compactitem}
By rearranging terms and letting $dt \rightarrow 0$, we uncover a recursive relationship between $f_k(t)$ and $f_{k-1}(t)$, by means of the following 
differential equation:
\begin{align}
\label{eq:prob-diff-eq-top-k}
{f_k}'(t) = -(\mu(t)+\lambda(t)) f_k(t) + \mu(t) f_{k-1}(t) + \lambda(t),
\end{align}

Perhaps surprisingly, we can find a closed form ex\-pression for $f_k(t)$, given by the following Lemma (proven in 
the Appendix~\ref{sup:proof-lem-f_k}):
\begin{lemma} \label{lem:f_k}
Given a broadcaster with message intensity $\lambda(t)$ and one of her followers with feed message intensity due to other broadcasters $\mu(t)$. The probability $f_k(t)$ 
that at least one message from the broadcaster is among the $k$ most recent ones received by the follower at time $t$ can be uniquely computed as
\begin{align} \label{eq:f_k}
 f_k(t) =  \frac{\int_{0}^{t} \lambda(\tau)  e^{-\int_{\tau}^{t} \lambda(x) dx}\, \Gamma[k,\int_{\tau}^{t} \mu(x) dx] \, d\tau} {(k-1)!},
\end{align}
given the boundary conditions $f_1(0) = \ldots = f_k(0) = 0$ and the incomplete Gamma function defined as $\Gamma[k,x] = \int_x^\infty \tau^{k-1} e^{-\tau} d \tau$.
\end{lemma}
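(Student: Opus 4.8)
The plan is to verify directly that the candidate family $f_k(t):=G_k(t)/(k-1)!$, with
$G_k(t):=\int_0^t \lambda(\tau)\,e^{-\int_\tau^t \lambda(x)\,dx}\,\Gamma[k,\int_\tau^t \mu(x)\,dx]\,d\tau$, satisfies the coupled initial-value problem \eqref{eq:ode-for-f}--\eqref{eq:prob-diff-eq-top-k} together with the boundary conditions $f_1(0)=\dots=f_k(0)=0$, and then to invoke uniqueness of solutions of this linear system of ODEs to conclude that \eqref{eq:f_k} is \emph{the} solution. Uniqueness is routine: solving the system sequentially in $k$, each $f_k$ obeys a scalar first-order linear ODE, whose solution is pinned down by the integrating-factor formula under the mild regularity already assumed on $\lambda,\mu$ (piecewise continuity suffices). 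The boundary condition is immediate, since $G_k(0)=0$ because the outer integral is then over an empty interval.

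The main work is checking the differential equation. I would differentiate $G_k$ in $t$ by the Leibniz rule. The contribution of the moving upper endpoint $\tau=t$ is $\lambda(t)\,e^{0}\,\Gamma[k,0]=\lambda(t)\,(k-1)!$, which after division by $(k-1)!$ produces exactly the free term $\lambda(t)$ in \eqref{eq:prob-diff-eq-top-k}. Differentiating the integrand, $\partial_t e^{-\int_\tau^t \lambda}=-\lambda(t)e^{-\int_\tau^t\lambda}$ yields the $-\lambda(t)f_k(t)$ part of $-(\mu(t)+\lambda(t))f_k(t)$, while $\partial_t\Gamma[k,\int_\tau^t\mu]=-\mu(t)\big(\int_\tau^t\mu\big)^{k-1}e^{-\int_\tau^t\mu}$ (using $\tfrac{d}{dx}\Gamma[k,x]=-x^{k-1}e^{-x}$ and $\partial_t\int_\tau^t\mu=\mu(t)$) contributes a leftover term $-\mu(t)\int_0^t \lambda(\tau)e^{-\int_\tau^t\lambda}\big(\int_\tau^t\mu\big)^{k-1}e^{-\int_\tau^t\mu}\,d\tau$.

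The crux is to recognize that this leftover is \emph{not} yet of the form $G_k$ or $G_{k-1}$, but becomes so via the standard incomplete-Gamma recurrence $x^{k-1}e^{-x}=\Gamma[k,x]-(k-1)\,\Gamma[k-1,x]$, obtained by one integration by parts in $\Gamma[k,x]=\int_x^\infty\tau^{k-1}e^{-\tau}d\tau$. Substituting $x=\int_\tau^t\mu$ and integrating against $\lambda(\tau)e^{-\int_\tau^t\lambda}\,d\tau$ rewrites the leftover as $G_k(t)-(k-1)G_{k-1}(t)$. Collecting all pieces gives $G_k'(t)=\lambda(t)(k-1)!-(\lambda(t)+\mu(t))G_k(t)+(k-1)\mu(t)G_{k-1}(t)$; dividing by $(k-1)!$ and using $G_{k-1}(t)/(k-2)!=f_{k-1}(t)$ yields precisely \eqref{eq:prob-diff-eq-top-k} for $k\geq 2$, and the case $k=1$ drops out of the same computation because the factor $(k-1)$ annihilates the $G_{k-1}$ term and $\Gamma[1,x]=e^{-x}$ makes $G_1$ the elementary integrating-factor solution of \eqref{eq:ode-for-f}. (An alternative route---applying the integrating factor to \eqref{eq:prob-diff-eq-top-k} to get $f_k(t)=\int_0^t e^{-\int_\tau^t(\lambda+\mu)}(\mu(\tau)f_{k-1}(\tau)+\lambda(\tau))\,d\tau$, inserting the inductive hypothesis, and swapping the order of integration---also works but is messier.) I expect the only delicate steps to be justifying differentiation under the integral sign, which the assumed piecewise smoothness of $\lambda,\mu$ handles, and getting the incomplete-Gamma recurrence and its indices exactly right; everything else is bookkeeping.
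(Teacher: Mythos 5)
Your proof is correct and follows essentially the same route as the paper's: verify that the proposed closed form satisfies the recursion \eqref{eq:prob-diff-eq-top-k} with $f_k(0)=0$ — the key identity being the incomplete-Gamma recurrence $x^{k-1}e^{-x}=\Gamma[k,x]-(k-1)\Gamma[k-1,x]$, which the paper uses in the rearranged form $\Gamma[k-1,x]=\tfrac{1}{k-1}\bigl(\Gamma[k,x]+\partial_x\Gamma[k,x]\bigr)$ — and then invoke uniqueness of the first-order linear ODE. The only difference is organizational: you differentiate the candidate directly by the Leibniz rule for every $k$ at once, whereas the paper runs an induction on $k$ and substitutes the inductive hypothesis into the right-hand side before verifying; the two computations are mirror images of each other.
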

\begin{figure}[t]
        \centering
        \begin{tabular}{c}
       		 \includegraphics[width=0.43\textwidth]{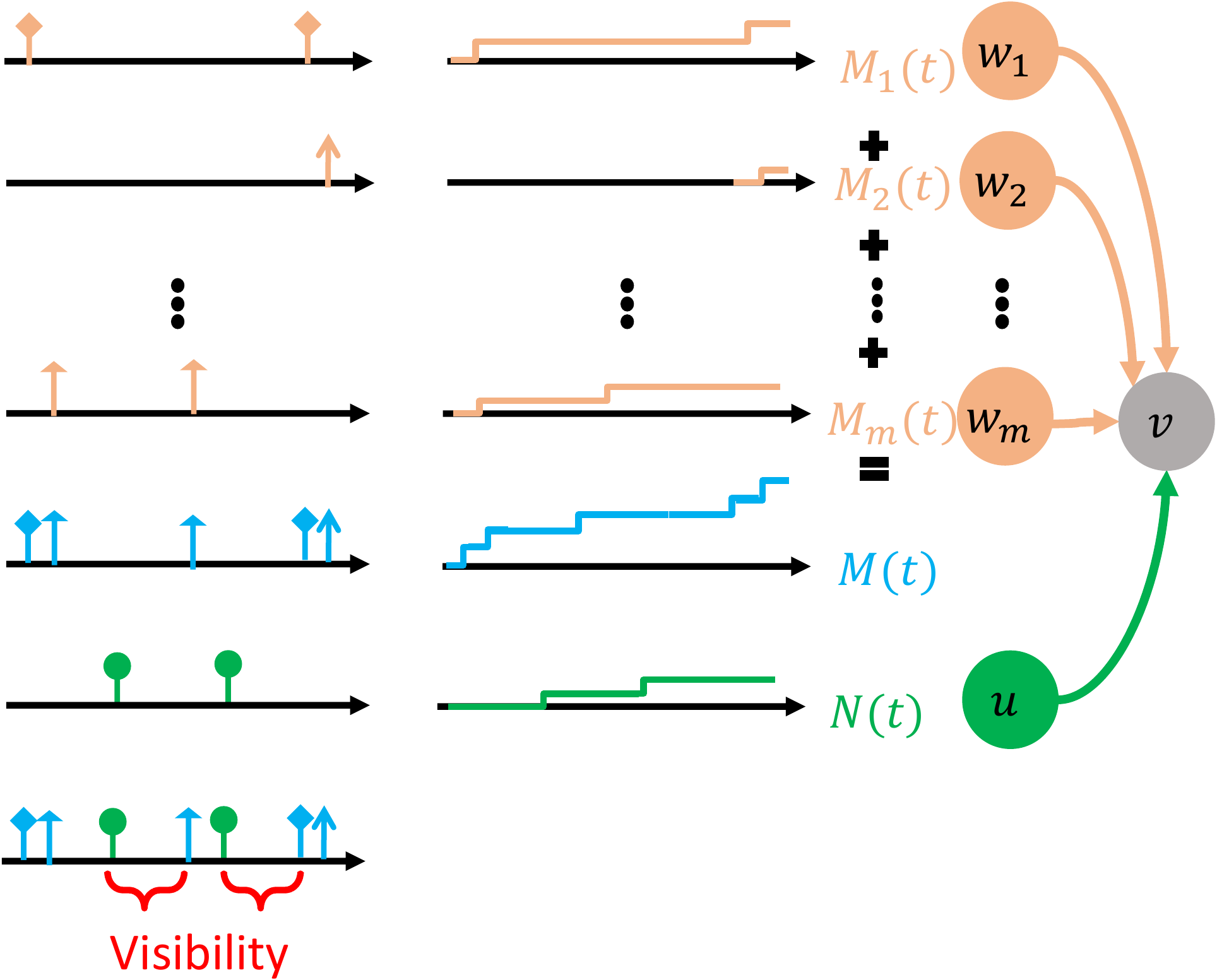}
        \end{tabular}
        \caption{The visibility shaping problem. A social media user $u$ broadcast $N_u(t)$ messages at a rate $\lambda_u(t)$. Her messages accumulate in each of her followers'{} feeds, 
        which receives $M_v(t)$ messages at a rate $\gamma_v(t) = \lambda_u(t) + \gamma_{v\backslash u}(t)$, where $\gamma_{v\backslash u}(t)$ denotes the message rate due to other 
        broadcasters $v$ follows. 
        For each follower, the average visibility of a user u'{}s messages is defined as the time that a post from user $u$ is among the last $k$ stories the follower received. In the visibility 
        shaping problem, the goal is to optimize $\lambda_u(t)$ to steer visibility.}
        \label{fig:formulation}
\end{figure}

\vspace{-1mm}
\section{On the Concavity of Visibility}
\label{sec:concavity}
Once we have a formula that allows us to compute the ave\-rage visibility given \emph{any} arbitrary intensities for the broadcasters,
we will now show that, remarkably, the average visibility is concave in the space of smooth intensity functions.
Moreover, we will also show that the average visibility is concave with respect to the parameters of piecewise constant functions, which we 
will use in our experiments.

\xhdr{Smooth intensity functions}
In this section, we assume that the message intensity of the broadcaster belongs to the space $\Hcal$ of all smooth functions.
Before we proceed, we need the following definition:
%

\begin{definition}
Given the space $\Hcal$ of all smooth functions, a functional $J:\Hcal \rightarrow \RR$ is concave if for every 
$g_1 ,g_2 \in \Hcal$ and $ 0 < \alpha < 1$:
\begin{equation} \label{eq:convexity-def}
J[\alpha g_1 + (1-\alpha) g_2]  \ge \alpha J[g_1] + (1-\alpha) J[g_2].
\end{equation}
A functional $J$ is convex if $-J$ is concave.
\end{definition}

It readily follows that the probability $f_k(t)$, given by Eq.~\ref{eq:f_k}, is a functional with  $\lambda(\cdot)$ as input. Moreover, the following two
theorems, proven in Appendices~\ref{sup:concave-f-k} and~\ref{sup:concave-expected-T}, establish the concavity of $f_k(t)$ and $\Vcal(k)$ with respect to 
$\lambda(\cdot)$.

\begin{theorem}
\label{theo:concave-f-k}
Given a broadcaster with message intensity $\lambda(t)$ and one of her followers with feed message intensity due to other broadcasters $\mu(t)$. 
The probability $f_k(t)$ that at least one message from the broadcaster is among the $k$ most recent ones received by the follower at time $t$, given
by Eq.~\ref{eq:f_k}, is concave with respect to $\lambda(\cdot)$.
\end{theorem}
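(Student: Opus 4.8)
The plan is to fix the observation time $t\in[0,T]$ (so that, by Eq.~\ref{eq:f_k}, $f_k(t)$ is a functional of $\lambda(\cdot)$ alone, with $\mu(\cdot)$ held fixed), take two arbitrary intensities $g_1,g_2\in\Hcal$, set $h:=g_1-g_2$ and $\lambda_\alpha:=g_2+\alpha h=\alpha g_1+(1-\alpha)g_2$, and prove that the scalar function $\phi(\alpha):=f_k(t)[\lambda_\alpha]$ satisfies $\phi''(\alpha)\le 0$ for $\alpha\in[0,1]$. Concavity of $\phi$ on $[0,1]$ then yields $\phi(\alpha)\ge\alpha\phi(1)+(1-\alpha)\phi(0)$, which is exactly Eq.~\ref{eq:convexity-def}. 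Since everything in sight is smooth on the compact interval $[0,t]$, differentiation under the integral sign is legitimate, so the whole argument reduces to a one-variable calculus computation.

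First I would introduce the shorthands $L_\alpha(\tau):=\int_\tau^t\lambda_\alpha(x)\,dx$, $H(\tau):=\int_\tau^t h(x)\,dx$ (so $\tfrac{d}{d\alpha}L_\alpha=H$, which is $\alpha$-independent), and the $\alpha$-independent weight $G(\tau):=\Gamma[k,\int_\tau^t\mu(x)\,dx]$, so that $(k-1)!\,\phi(\alpha)=\int_0^t\lambda_\alpha(\tau)\,e^{-L_\alpha(\tau)}\,G(\tau)\,d\tau$. Differentiating twice in $\alpha$ (using $\tfrac{d}{d\alpha}\lambda_\alpha=h$ and $\tfrac{d}{d\alpha}e^{-L_\alpha}=-H e^{-L_\alpha}$) and collecting terms gives
\[
(k-1)!\,\phi''(\alpha)=\int_0^t\bigl(\lambda_\alpha(\tau)\,H(\tau)^2-2h(\tau)H(\tau)\bigr)\,e^{-L_\alpha(\tau)}\,G(\tau)\,d\tau .
\]
The crucial algebraic observation is that, because $\tfrac{d}{d\tau}L_\alpha(\tau)=-\lambda_\alpha(\tau)$ and $\tfrac{d}{d\tau}H(\tau)=-h(\tau)$, the factor multiplying $G(\tau)$ is an exact $\tau$-derivative:
\[
\bigl(\lambda_\alpha H^2-2hH\bigr)e^{-L_\alpha}=H^2\,\tfrac{d}{d\tau}\!\bigl[e^{-L_\alpha}\bigr]+e^{-L_\alpha}\,\tfrac{d}{d\tau}\!\bigl[H^2\bigr]=\tfrac{d}{d\tau}\!\bigl[H(\tau)^2 e^{-L_\alpha(\tau)}\bigr].
\]

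Integrating by parts then gives $(k-1)!\,\phi''(\alpha)=\bigl[G(\tau)H(\tau)^2 e^{-L_\alpha(\tau)}\bigr]_{0}^{t}-\int_0^t G'(\tau)\,H(\tau)^2\,e^{-L_\alpha(\tau)}\,d\tau$. The boundary term at $\tau=t$ vanishes since $H(t)=\int_t^t h=0$, so what remains is $-G(0)H(0)^2e^{-L_\alpha(0)}-\int_0^t G'(\tau)H(\tau)^2 e^{-L_\alpha(\tau)}\,d\tau$. The first term is $\le 0$ because $\Gamma[k,\cdot]\ge 0$; for the integral I would note that $G(\tau)=\Gamma[k,\int_\tau^t\mu]$ with $\mu\ge 0$, so by the chain rule together with $\tfrac{d}{dx}\Gamma[k,x]=-x^{k-1}e^{-x}$ one gets $G'(\tau)=\mu(\tau)\bigl(\int_\tau^t\mu\bigr)^{k-1}e^{-\int_\tau^t\mu}\ge 0$, hence that integrand is nonnegative and enters with a minus sign. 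Both pieces are nonpositive, so $\phi''(\alpha)\le 0$, which is the claim. The part I expect to be the real obstacle is spotting the two structural facts that make this collapse: that $(\lambda_\alpha H^2-2hH)e^{-L_\alpha}$ is a perfect $\tau$-derivative, so that a single integration by parts disposes of the whole expression up to a boundary term and one leftover $G'$-integral, and that the incomplete-Gamma weight $G(\tau)$ is monotone nondecreasing in $\tau$ precisely because $\mu\ge 0$, which is exactly the sign needed for that leftover integral. Everything else --- justifying differentiation under the integral, and checking the degenerate case $k=1$, where $\Gamma[1,x]=e^{-x}$ and the formula for $G'$ still holds with $x^0=1$ --- is routine.
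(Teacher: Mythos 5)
Your proof is correct, but it takes a genuinely different route from the paper's. The paper never differentiates along a segment: it first integrates Eq.~\ref{eq:f_k} by parts in $\tau$ so as to rewrite $f_k(t)$ as $1$ minus a nonnegative combination of the building blocks $e^{-\int_a^t\lambda(x)dx}$, namely $f_k(t)=1-\tfrac{1}{(k-1)!}e^{-\int_0^t\lambda}\,\Gamma[k,\int_0^t\mu]-\tfrac{1}{(k-1)!}\int_0^t e^{-\int_\tau^t\lambda}\,\partial_\tau\Gamma[k,\int_\tau^t\mu]\,d\tau$, and then invokes two small lemmas: the functional $\lambda\mapsto e^{-\int_a^t\lambda}$ is convex (proved via the arithmetic--geometric mean inequality, i.e.\ convexity of the exponential of a linear functional), and convexity is preserved under integration against a nonnegative weight, the weight here being $\partial_\tau\Gamma[k,\int_\tau^t\mu]\ge 0$. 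You instead restrict to the segment $\lambda_\alpha=\alpha g_1+(1-\alpha)g_2$, differentiate twice under the integral, observe that the integrand of $\phi''(\alpha)$ is the exact $\tau$-derivative $\tfrac{d}{d\tau}\bigl[H(\tau)^2e^{-L_\alpha(\tau)}\bigr]$, and conclude by one integration by parts plus the same two sign facts (nonnegativity of the incomplete Gamma at the boundary and $G'(\tau)=\mu(\tau)\bigl(\int_\tau^t\mu\bigr)^{k-1}e^{-\int_\tau^t\mu}\ge 0$). So the two arguments hinge on identical structural ingredients --- an integration by parts in $\tau$ and the monotonicity of the Gamma weight coming from $\mu\ge 0$ --- but package them differently: the paper's decomposition is arguably cleaner and needs no smoothness in $\alpha$ or second-derivative computation (convexity of $e^{-\int_a^t\lambda}$ holds with no regularity beyond integrability), while your computation is a self-contained one-variable calculus argument that makes explicit exactly where the sign of $\mu$ enters. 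Your verification steps (vanishing boundary term at $\tau=t$ since $H(t)=0$, the $k=1$ degenerate case, and differentiation under the integral on a compact interval) are all sound, so the proof stands as written.
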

\begin{theorem}
\label{theo:concave-expected-T}
Given a broadcaster with message intensity $\lambda(t)$ and one of her followers with feed message intensity due to other broadcasters $\mu(t)$. 
The visibility $\Vcal(k)$, given by Eq.~\ref{eq:average-visibility}, is concave with respect to $\lambda(\cdot)$.
\end{theorem}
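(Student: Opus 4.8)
The plan is to derive this directly from Theorem~\ref{theo:concave-f-k}, exploiting the fact that, by Eq.~\ref{eq:average-visibility}, the visibility $\Vcal(k)$ is simply the integral of $f_k(t)$ over the observation window $[0,T]$, and that concavity is preserved under nonnegative integration. Concretely, fix two smooth broadcaster intensities $g_1,g_2\in\Hcal$ and a scalar $\alpha\in(0,1)$. Viewing $\lambda(\cdot)\mapsto f_k(t)$ as a functional for each fixed $t$, Theorem~\ref{theo:concave-f-k} gives the pointwise inequality
\[
f_k(t)\big[\alpha g_1+(1-\alpha)g_2\big]\;\ge\;\alpha\, f_k(t)[g_1]+(1-\alpha)\, f_k(t)[g_2]\qquad\text{for every }t\in[0,T].
\]

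The next step is to integrate this inequality over $[0,T]$. Because the inequality holds for all $t$ and each side is a bounded function of $t$ on the compact interval $[0,T]$ (indeed $0\le f_k(t)\le 1$, since it is a probability), monotonicity and linearity of the integral yield
\[
\int_0^T f_k(t)\big[\alpha g_1+(1-\alpha)g_2\big]\,dt\;\ge\;\alpha\!\int_0^T\! f_k(t)[g_1]\,dt+(1-\alpha)\!\int_0^T\! f_k(t)[g_2]\,dt.
\]
By the definition $\Vcal(k)=\int_0^T f_k(t)\,dt$, the left-hand side equals $\Vcal(k)[\alpha g_1+(1-\alpha)g_2]$ and the right-hand side equals $\alpha\,\Vcal(k)[g_1]+(1-\alpha)\,\Vcal(k)[g_2]$, which is exactly the concavity condition in Eq.~\ref{eq:convexity-def}. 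The weighted variant with a time-significance function $s(t)\ge 0$ follows identically: multiplying the pointwise inequality by the nonnegative factor $s(t)$ preserves its direction, and integrating as above shows $\int_0^T f_k(t)s(t)\,dt$ is again concave in $\lambda(\cdot)$.

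In terms of difficulty, essentially all the analytic work has already been done in Theorem~\ref{theo:concave-f-k}; what remains is the classical observation that a nonnegative weighted combination (here, an integral) of concave functionals is concave. The only point deserving a word of care is the interchange of the convex-combination operation with the $t$-integral, but since $\alpha$, $g_1$ and $g_2$ are held fixed, $t\mapsto f_k(t)[\alpha g_1+(1-\alpha)g_2]$ is an ordinary function of $t$ and no measure-theoretic subtlety arises; boundedness of $f_k$ by $1$ makes all three integrals well-defined, matching the ``given the integral is well-defined'' caveat already attached to Eq.~\ref{eq:average-visibility}. I therefore do not anticipate a genuine obstacle here --- the statement is effectively a corollary of Theorem~\ref{theo:concave-f-k}.
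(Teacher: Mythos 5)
Your proof is correct and follows essentially the same route as the paper: both deduce the result from Theorem~\ref{theo:concave-f-k} together with the fact that integrating against a nonnegative weight preserves concavity. The paper merely packages this by applying its convexity-preservation lemma (Lemma~\ref{lem:convexity-integral}) to $1-f_k(t)$ and subtracting from the constant $\int_0^T s(t)\,dt$, which is the same argument in mirror image.
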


Given the above results, one could think of finding the optimal (general) message intensity $\lambda(t)$ that maximize (a function of) the average 
visibilities across a broadcaster'{}s follo\-wers. 
However, in practical applications, this may be inefficient and undesirable, instead, one may focus on a simpler parametrized family of intensities, 
such as piecewise constant intensity functions, which will be easier to optimize and fit using real data.
To this aim, next, we prove that the average visibility is also concave on the parameters defining piecewise constant intensity functions.

\xhdr{Piecewise constant intensity functions}
In this section, we assume that the message intensity $\lambda(t)$ of the broadcaster belongs to the space of piecewise constant functions 
$\lambda : [0, T] \rightarrow \RR$, denoted by $\Gcal$, which we parametrized as follows:
%
%
%
\begin{equation} \label{eq:piecewise-constant-function}
\lambda(t) = \sum_{m=1}^M a_m \II(\tau_{m-1} \le t < \tau_m),
\end{equation}
where $a_m \geq 0$, $M$ is the number of pieces, $\tau_i - \tau_{i-1} = T/M = \Delta$ and $\tau_0=0$.

As the reader may have noticed, the results from the previous section are not readily usable since Lemma~\ref{lem:f_k} requires the intensity functions 
to be smooth.
However, we will now show that, for every function $\lambda(t) \in \Gcal$, there is a sequence of smooth functions $\lambda_n(t) \in \Hcal$ such that 
$\lim_{n \rightarrow \infty} \lambda_n(t) = \lambda(t)$ and, this will sufficient to prove concavity.
Before we proceed, we need the following definition:
\begin{definition}
A functional $J: \Gcal \to \Hcal$ is said to be continuous at $\lambda(\cdot) \in \Hcal$ if for every $\epsilon > 0 $, there is a $\delta > 0$ such that
\begin{align}
| J[\lambda] - J[\lambdahat] | < \epsilon
\end{align}
provided that $|| \lambda - \lambdahat || < \delta$, where $||\cdot||$ is a norm in $\Hcal$. 
\end{definition}

It readily follows that the probability $f_k$ is a continuous functional on $\Hcal$. Moreover, we need the following lemma (proven in Appendix~\ref{sup:lem-limit}) to prove the concavity:
%
%
\begin{lemma}
\label{lem:sequence}
For every $\lambda(t) \in \Gcal$, there is a sequence of smooth functions $\lambda_n(t) \in \Hcal$ where $\lim_{n \to \infty} \lambda_n(t) = \lambda(t)$. 
\end{lemma}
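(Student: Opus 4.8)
The plan is a mollification (smoothing) argument: I want to round off the finitely many jumps of the step function $\lambda$ on ever‑shrinking windows. Write $\lambda(t)=\sum_{m=1}^{M}a_m\II(\tau_{m-1}\le t<\tau_m)$ as in Eq.~\ref{eq:piecewise-constant-function}, with $a_m\ge 0$; the only discontinuities sit at the interior breakpoints $\tau_1,\dots,\tau_{M-1}$. The quickest route is to extend $\lambda$ to $\widetilde\lambda$ on $\RR$ by the constants $a_1$ on $(-\infty,0)$ and $a_M$ on $(T,\infty)$, fix a mollifier $\phi\in C_c^\infty(\RR)$ with $\phi\ge 0$, $\int\phi=1$, $\operatorname{supp}\phi\subset[-1,1]$, and set $\lambda_n:=\big(\widetilde\lambda*(n\phi(n\cdot))\big)\big|_{[0,T]}$. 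Standard properties of convolution then give, in one stroke, that $\lambda_n\in C^\infty$ (so $\lambda_n\in\Hcal$), that $0\le \lambda_n\le\max_m a_m$ (so each $\lambda_n$ is again an admissible nonnegative intensity), that $\lambda_n\to\lambda$ uniformly on every compact subset of $[0,T]\setminus\{\tau_1,\dots,\tau_{M-1}\}$ and pointwise at every continuity point of $\lambda$, and that $\lambda_n\to\lambda$ in $L^1[0,T]$.

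If one wants honest pointwise convergence at the breakpoints as well (where Eq.~\ref{eq:piecewise-constant-function} assigns $\lambda(\tau_m)=a_{m+1}$), I would instead glue with a one‑sided smooth transition. Fix once and for all a nondecreasing $\sigma\in C^\infty(\RR)$ with $\sigma\equiv 0$ on $(-\infty,-1]$, $\sigma\equiv 1$ on $[1,\infty)$ and all derivatives vanishing at $\pm 1$ (the usual bump‑function construction), and for $n$ large enough that $2/n<\Delta$ define
\begin{equation}
\lambda_n(t):=a_1+\sum_{m=1}^{M-1}(a_{m+1}-a_m)\,\sigma\!\big(n(t-\tau_m)+1\big).
\end{equation}
Each summand is $C^\infty$ (a composition of $\sigma$ with an affine map), it rises from $0$ to $a_{m+1}-a_m$ across the window $(\tau_m-2/n,\tau_m)$ and is locally constant elsewhere, and since $2/n<\Delta$ these windows are pairwise disjoint and contained in $(0,T)$; flatness of $\sigma$ at $\pm 1$ makes the gluing seamless, so $\lambda_n\in\Hcal$. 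On each window $\lambda_n$ is a convex combination of $a_m,a_{m+1}\ge 0$ and elsewhere equals some $a_m\ge 0$, so $\lambda_n\ge 0$. A short telescoping check then shows $\lambda_n(t)=\lambda(t)$ exactly for every $t\notin\{\tau_m\}$ once $n$ exceeds $2/\mathrm{dist}(t,\{\tau_m\})$, and $\lambda_n(\tau_m)=a_1+\sum_{j\le m}(a_{j+1}-a_j)=a_{m+1}=\lambda(\tau_m)$ for all large $n$; hence $\lambda_n(t)\to\lambda(t)$ (indeed stationarily) for every $t\in[0,T]$.

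The step I expect to require the most care is not the convergence itself --- it is trivial, even stationary --- but verifying that the approximants genuinely lie in $\Hcal$, i.e. that the smoothing does not introduce a corner at the ends of the transition (respectively mollification) windows; this is exactly what the flatness of $\sigma$ at $\pm 1$, or the $C^\infty$ mollifier, is there to handle, so it is bookkeeping rather than a real obstacle. It is worth recording why this lemma suffices for the sequel: given $g_1,g_2\in\Gcal$ and $0<\alpha<1$, approximate each by smooth $g_{1,n},g_{2,n}\in\Hcal$, apply the $\Hcal$‑concavity of $f_k$ and $\Vcal(k)$ from Theorems~\ref{theo:concave-f-k} and~\ref{theo:concave-expected-T}, and pass to the limit using the continuity of $f_k$ in $\lambda(\cdot)$ together with dominated convergence for $\Vcal(k)=\int_0^T f_k(t)\,dt$ (the integrand is bounded by $1$ and the $\lambda_n$ are uniformly bounded by $\max_m a_m$) --- so in fact $L^1$, or uniform‑off‑the‑breakpoints, convergence plus that uniform bound is all that is really needed.
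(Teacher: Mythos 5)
Your proposal is correct and follows essentially the same route as the paper: the paper's (very brief) proof writes the piecewise constant intensity as a finite sum of scaled Heaviside steps at the breakpoints and replaces each step by a smooth sigmoid approximant (tanh there, your compactly supported smooth step $\sigma$ here), which is exactly your second construction. Your mollification variant and the explicit checks of smoothness, nonnegativity, convergence at the breakpoints, and sufficiency of the convergence mode for the subsequent concavity limit are more careful than the paper's sketch, but they do not constitute a different method.
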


Using Lemma~\ref{lem:sequence}, for any $\lambda(t) \in \Gcal$, it follows that
\begin{align}
f_k(\lambda(\cdot)) = \lim_{n \to \infty} f_k(\lambda_n(t))
\end{align} 
where $\lambda_n(t)$ is a sequence of smooth functions such that $\lim_{n \rightarrow \infty} \lambda_n(t) =  \lambda(t)$. 
As a consequence, we can establish the concavity of $f_k(t)$ and $\Vcal(k)$ with respect to $a_1, \ldots, a_M$ with the
following Theorem (proven in Appendix~\ref{app:piecewise-concave}):
\begin{theorem} \label{theo:piecewise-concave}
$f_k$ and $\Vcal(k)$ are concave functionals in the space of piecewise constant functions $\Gcal$.
\end{theorem}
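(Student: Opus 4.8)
The plan is to transfer the concavity established in Theorems~\ref{theo:concave-f-k} and~\ref{theo:concave-expected-T} for smooth intensities over to the piecewise constant family $\Gcal$ by a pointwise-limit argument. First I would fix the parametrization: a function $\lambda \in \Gcal$ is determined by its coefficient vector $\ab = (a_1,\ldots,a_M) \in \RR_{\ge 0}^M$, so that concavity in $\Gcal$ means concavity of the maps $\ab \mapsto f_k(\lambda_{\ab})(t)$ (for each fixed $t$) and $\ab \mapsto \Vcal(k)$. Take two coefficient vectors $\ab$ and $\bb$ with associated piecewise constant intensities $\lambda_{\ab}, \lambda_{\bb} \in \Gcal$, and a mixing weight $0 < \alpha < 1$; note crucially that $\alpha \lambda_{\ab} + (1-\alpha)\lambda_{\bb} = \lambda_{\alpha \ab + (1-\alpha)\bb}$, i.e. the convex combination stays in $\Gcal$ and its coefficients are the convex combination of the coefficients, because all three functions share the same breakpoints $\tau_0,\ldots,\tau_M$.

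Next I would invoke Lemma~\ref{lem:sequence} to pick sequences of smooth functions $\lambda_n^{\ab} \to \lambda_{\ab}$ and $\lambda_n^{\bb} \to \lambda_{\bb}$ pointwise, with $\lambda_n^{\ab}, \lambda_n^{\bb} \in \Hcal$. Then $\alpha \lambda_n^{\ab} + (1-\alpha)\lambda_n^{\bb}$ is a smooth function converging pointwise to $\alpha \lambda_{\ab} + (1-\alpha)\lambda_{\bb}$. Applying Theorem~\ref{theo:concave-f-k} to the smooth approximants gives, for every $n$,
\begin{align}
f_k\bigl(\alpha \lambda_n^{\ab} + (1-\alpha)\lambda_n^{\bb}\bigr)(t) \ge \alpha f_k(\lambda_n^{\ab})(t) + (1-\alpha) f_k(\lambda_n^{\bb})(t).
\end{align}
Using the continuity of the functional $f_k$ on $\Hcal$ (noted just before Lemma~\ref{lem:sequence}) together with the relation $f_k(\lambda) = \lim_n f_k(\lambda_n)$ established in the text for $\lambda \in \Gcal$, I would pass to the limit $n \to \infty$ on both sides and conclude
\begin{align}
f_k\bigl(\alpha \lambda_{\ab} + (1-\alpha)\lambda_{\bb}\bigr)(t) \ge \alpha f_k(\lambda_{\ab})(t) + (1-\alpha) f_k(\lambda_{\bb})(t),
\end{align}
which is precisely concavity of $f_k$ over $\Gcal$. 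Finally, for $\Vcal(k) = \int_0^T f_k(t)\,dt$ (or $\int_0^T f_k(t) s(t)\,dt$ with a nonnegative significance function), concavity is preserved because integration against a nonnegative measure is a linear, order-preserving operation: integrating the pointwise inequality over $[0,T]$ yields concavity of $\Vcal(k)$ in $\ab$. Composing with any affine reparametrization of the $a_m$'s preserves concavity as well, so the statement holds for the family as parametrized in Eq.~\ref{eq:piecewise-constant-function}.

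The main obstacle I anticipate is justifying the interchange of limit and the operations defining $f_k$ — that is, making rigorous the claim "$f_k(\lambda) = \lim_n f_k(\lambda_n)$" when $\lambda_n \to \lambda$ only pointwise. This requires showing that the Lemma~\ref{lem:sequence} approximants can be chosen uniformly bounded (e.g. dominated by $\max_m a_m + 1$ off a vanishing neighborhood of the breakpoints), so that dominated convergence applies to the iterated integrals $\int_0^t \lambda_n(\tau) e^{-\int_\tau^t \lambda_n}\Gamma[k,\int_\tau^t \mu]\,d\tau$ appearing in Eq.~\ref{eq:f_k}; the integrand is continuous and bounded in $\lambda_n$ on compacts, so once domination is secured the convergence $f_k(\lambda_n)(t)\to f_k(\lambda)(t)$ follows, and a second application of dominated convergence handles $\Vcal(k)$. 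I would also double-check the edge case where a breakpoint of the piecewise function creates a jump discontinuity in $\lambda$: since $f_k$ from Eq.~\ref{eq:f_k} only depends on $\lambda$ through its integral, the measure-zero set of discontinuities is harmless, but this should be stated explicitly.
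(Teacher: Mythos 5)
Your proposal is correct and follows essentially the same route as the paper's own proof: approximate the piecewise constant intensities by smooth functions via Lemma~\ref{lem:sequence}, apply the concavity on $\Hcal$ from Theorem~\ref{theo:concave-f-k} to the smooth approximants (using that the convex combination stays in $\Gcal$ with the same breakpoints), pass to the limit by continuity of $f_k$, and integrate against the nonnegative significance to get $\Vcal(k)$. Your added remarks on justifying $f_k(\lambda)=\lim_n f_k(\lambda_n)$ via dominated convergence only make explicit the continuity step the paper asserts without detail, so no substantive difference in approach.
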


\begin{corollary}
If we represent $\lambda \in \Gcal$ using Eq.~\ref{eq:piecewise-constant-function}, $f_k(t)$ and $\Vcal(k)$ are concave with respect to 
$a_1, \ldots, a_m$.
\end{corollary}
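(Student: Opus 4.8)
The plan is to bootstrap from the smooth case to the piecewise‑constant case by a density/continuity argument, exactly as the surrounding discussion suggests. First note that $\Gcal$ is convex: if $\lambda,\lambdahat$ are nonnegative piecewise constant functions on the grid $\tau_0,\dots,\tau_M$, so is $\alpha\lambda+(1-\alpha)\lambdahat$ for $\alpha\in(0,1)$, so convex combinations stay in $\Gcal$ and the statement makes sense. Now fix $\lambda,\lambdahat\in\Gcal$ and $\alpha\in(0,1)$. By Lemma~\ref{lem:sequence} there are sequences of smooth functions $\lambda_n\to\lambda$ and $\lambdahat_n\to\lambdahat$ in $\Hcal$; since $\Hcal$ is a vector space and convergence is preserved by linear combinations, $\alpha\lambda_n+(1-\alpha)\lambdahat_n$ is again a sequence in $\Hcal$ converging to $\alpha\lambda+(1-\alpha)\lambdahat$.

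Applying Theorem~\ref{theo:concave-f-k} to the smooth members of these sequences gives, for every $n$ and every $t\in[0,T]$,
\[
f_k[\alpha\lambda_n+(1-\alpha)\lambdahat_n](t)\ \ge\ \alpha\, f_k[\lambda_n](t)+(1-\alpha)\, f_k[\lambdahat_n](t).
\]
Letting $n\to\infty$ and using that $f_k$ is a continuous functional on $\Hcal$ (the closed form~\eqref{eq:f_k} depends continuously on $\lambda(\cdot)$ through its exponential and incomplete‑Gamma factors), each of the three terms converges to its value at the corresponding limit intensity, and the weak inequality is preserved in the limit. This yields $f_k[\alpha\lambda+(1-\alpha)\lambdahat](t)\ge \alpha f_k[\lambda](t)+(1-\alpha)f_k[\lambdahat](t)$, i.e.\ concavity of $f_k$ on $\Gcal$. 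For $\Vcal(k)$ one can either repeat the argument verbatim with Theorem~\ref{theo:concave-expected-T} in place of Theorem~\ref{theo:concave-f-k}, or simply integrate the pointwise inequality over $[0,T]$: since $0\le f_k(t)\le 1$ the integral in~\eqref{eq:average-visibility} is finite, and integration is linear and monotone, so
\[
\Vcal(k)[\alpha\lambda+(1-\alpha)\lambdahat]\ \ge\ \alpha\,\Vcal(k)[\lambda]+(1-\alpha)\,\Vcal(k)[\lambdahat]
\]
(with a time‑significance weight $s(t)\ge 0$ present, multiply by $s(t)$ before integrating; nothing changes). Finally, the corollary is immediate: the map $(a_1,\dots,a_M)\mapsto\sum_{m=1}^M a_m\,\II(\tau_{m-1}\le t<\tau_m)$ from the nonnegative orthant into $\Gcal$ is affine and sends convex combinations of coefficient vectors to convex combinations of intensities, so concavity of $f_k$ and $\Vcal(k)$ as functionals on $\Gcal$ is equivalent to concavity of the induced functions of $a_1,\dots,a_M$ on $\{a_m\ge 0\}$.

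The step I expect to require the most care is the passage to the limit, which rests on two facts stated only informally in the excerpt: (i) that $\alpha\lambda_n+(1-\alpha)\lambdahat_n$ is a legitimate smooth approximating sequence for $\alpha\lambda+(1-\alpha)\lambdahat$ — true because $\Hcal$ is a vector space and limits commute with linear combinations; and (ii) the continuity of $f_k$ (and, for $\Vcal(k)$, the interchange of the limit with the $dt$‑integral). The cleanest way to secure (ii) is a quantitative bound: from~\eqref{eq:f_k} one can control $|f_k[\lambda](t)-f_k[\lambdahat](t)|$ by $\sup_{[0,T]}|\lambda-\lambdahat|$, so if the norm on $\Hcal$ is (or dominates) the sup norm, continuity holds and the pointwise limit goes through; for $\Vcal(k)$ the uniform bound $f_k\le 1$ lets one invoke dominated convergence to move the limit inside the integral. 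Everything else is routine bookkeeping.
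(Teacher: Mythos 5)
Your proposal is correct and follows essentially the same route as the paper: approximate the piecewise constant intensities by smooth sequences (Lemma~\ref{lem:sequence}), apply the concavity results on $\Hcal$ (Theorems~\ref{theo:concave-f-k} and~\ref{theo:concave-expected-T}), pass to the limit using continuity of $f_k$ and the convexity of $\Gcal$, and then transfer concavity to the coefficients via the affine parametrization in Eq.~\ref{eq:piecewise-constant-function}. Your added remarks on securing the limit interchange (sup-norm continuity and dominated convergence) only make explicit what the paper leaves informal.
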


\vspace{-1mm}
\section{Convex Visibility Shaping Framework}
\label{sec:formulation}
Given the concavity of the average visibility, we now propose a convex optimization framework for a variety of vi\-si\-bi\-li\-ty shaping tasks. 
In all these tasks, our goal is to find the optimal message intensity $\lambda_{u}(t)$ for broadcaster $u$ that maximizes a particular nondecreasing concave utility function $U(\Vcalb_u(k))$ of the average visibility of broadcaster $u$ in all her followers within a time window $[0, T]$, \ie,
\begin{equation}
	\label{eq:generalized-activity-maximization}
	\begin{array}{ll}
		\mbox{maximize}_{\lambda_u(t)} & U(\Vcalb_u(k))  \\
		\mbox{subject to} 
		& \lambda_u(t) \ge 0  \quad  t \in [0,T] \\
		&  \int_0^T \lambda(t) \, dt \le C,
	\end{array}
\end{equation}
where $\Vcalb_u(k) = (\Vcal_{uv}(k))_{v \in \Ncal(u)^{+}}$, $\Ncal(u)^{+}$ denotes the broadcaster $u$'s followers, $\Vcal_{uv}(k)$ denotes the average visibility in follower $v$, 
the first constraint asserts the intensity function remains positive, 
and the second limits the average number of messages broadcasted within $[0, T]$ to be no more than $C$.
%

We next discuss two instances of the general framework, which achieve different goals (their constraints remain the same and hence omitted). More generally, the 
flexibility of our framework allows to use any nondecreasing concave utility function. 

\xhdr{Average Visibility Maximization (AVM)}
The goal here is to maximize the sum of the visibility for all the broadcaster'{}s followers, $\ie$,
\begin{align} \label{eq:avm-formulation}
\mbox{maximize}_{\lambda_u(t)}  \quad  \sum_{v \in \Ncal(u)^{+}}{ \Vcal_{uv}(k) }
\end{align}

%
%
%

\xhdr{Minimax Visibility Maximization (MVM)}
Suppose our goal is instead to keep the visibility in the $n$ followers with the smallest visibility value above a certain minimum level, or, 
alternatively make the average visibility across the $n$ followers with the smallest visibility as high as possible. Then, we can perform the 
following \emph{minimax visibility maximization} task
\begin{align} \label{eq:mvm-formulation}
 \mbox{maximize}_{\lambda_u(t)} \sum_{i=1}^{n} \Vcal_{uv_{[i]}}(k),
\end{align}
where $\Vcal_{uv_{[i]}}(k)$ denotes the average visibility in the follower with the $i$-th smallest visibility among all the
broadcaster'{}s followers.

%
%
%
%
%
%
%

\vspace{-1mm}
\section{Scalable Algorithm}
\label{sec:algorithm}
\begin{algorithm}[t]
\caption{Projected Gradient Descent for Visibility Shaping}
\label{algorithm1}
Initialize $\cb$\;
\Repeat{convergence}
{
   1- Project $\cb$ into the polytope $\cb \geqslant 0$, $\cb^{\top} \one \leqslant C$\;
   2- Find the gradient $\gb(\cb)$\;
   3- Update $\cb$ using the gradient $\gb(\cb)$\;
}
\end{algorithm}

To solve the visibility shaping problems defined above, we need to be able to (efficiently) evaluate the probability function $f_k$ and visibility $\Vcal(k)$. However, a direct evaluation 
by means of Eqs.~\ref{eq:f_k} and~\ref{eq:average-visibility} seem difficult. 
Here, we present an alternative representation of the probability function $f_k$ and the visibility $\Vcal(k)$ for piece-wise constant intensity functions, which allow us to compute
these quantities very efficiently. Based on this result, we present an efficient gradient based algorithm to find the optimum intensity.

Assume the broadcaster'{}s message intensity $\lambda(t)$ and the follower'{}s feed message intensity due to other broadcasters $\mu(t)$ adopt the 
following form:
\vspace{-2mm}
\begin{equation} 
\lambda(t) = \sum_{m=1}^M c_m \II(\tau_{m-1} \le t < \tau_m)~\text{and}~\mu(t) = \sum_{m=1}^M b_m \II(\tau_{m-1} \le t < \tau_m). \nonumber
\end{equation}
\vspace{-2mm}
Then, each piece $m$ in the above intensities satisfies the recurrence relation given by Eq.~\ref{eq:prob-diff-eq-top-k}, which we rewrite as
\begin{equation}
f'_k(t) + (b_m+c_m) f_k(t)  = c_m + b_m f_{k-1}(t), \nonumber
\end{equation}
and one can easily prove by induction that, in general, the solution of the above differential equation for each time interval $\tau_{m-1} \leq t < \tau_m$ is given by
\begin{equation} \label{eqn-main-rec}
f_k(t) = e^{-(b_m+c_m)t}(\alpha_{k-1,k} t^{k-1}+\cdots+ \alpha_{0,k}) + \beta_k
\end{equation}
where $\alpha_{i,k} = \frac{b^i}{i!}(h_{k-i} - \beta_{k-i})$, 
$\beta_i = 1 - \left(\frac{b_m}{b_m+c_m}\right)^i$, and $h_i$ is the probability $f_i(\tau_{i-1})$ at the beginning of time interval. Such representation allows for an
efficient evaluation of $f_k(t)$.

Next, we also need to compute the integral of $f_k(t)$ to efficiently compute the visibility $\Vcal(k)$. Without loss of generality, we represent the time for each piece 
in a normalized time window $[0, 1]$. Then, the integral of $f_k(t)$ can be written as follows:
\begin{equation} 
\Vcal(k) = \int_0^1  f_k(t)\,dt 
=  \beta_k + \sum_{i=0}^{k-1} \alpha_{i, k}\int_0^1 e^{-(b_m+c_m)t}t^i\,dt
= \beta_k + \sum_{i=0}^{k-1} \frac{\alpha_{i, k}}{(b_m+c_m)^{i+1}}[i! - \Gamma(i+1, b_m+c_m)]  \label{eqn-last}
\end{equation}
where note that the last term is efficiently computable since, for integer values of $n$, the incomplete Gamma function $\Gamma(n, x) = (n-1)!\,e^{-x}\sum_{i=0}^{n-1} \frac{x^i}{i!}$.

Given Eq.~\ref{eqn-last}, we can now easily compute the gradient of the visibility $\Vcal(k)$, which we can then use to design an effi\-cient gradient based algorithm.
For brevity, we just show the gradient for $k=1$. Let $\cb = (c_1, \ldots, c_M)$, and $\yb = (y_0,\ldots, y_{M-1}, y_M)$ be the values of $f_k(t)$ at the beginning of each time 
interval, then,
\begin{equation}
\frac{\partial \Vcal(1)}{\partial c_i} = \frac{1}{(b_i+c_i)^2}\left(-\frac{\partial y_i}{\partial c_i}(b_i+c_i) + (y_i - y_{i-1}) + b_i\right) 
+ \sum_{m=i+1}^M\frac{1}{b_m+c_m}\left[\frac{\partial y_{m-1}}{\partial c_i} - \frac{\partial y_m}{\partial c_i}\right]. \nonumber
\end{equation}
where we can easily compute ${\partial y_j}/{\partial c_m}$ recursively as
\begin{align}
\frac{ b_m } {(b_m + c_m) ^ 2 } - \left(y_{m - 1} - \frac{c_m} { (b_m + c_m) } - \frac{ b_m } {(b_m + c_m) ^ 2 }\right) e^{-(b_m + c_m)}, \nonumber
\end{align}
if $j = m$, and $e^{-(b_j + c_j)} \frac{\partial y_{j-1}}{\partial c_m}$, if $j > m$.

Once we have an efficient way to compute the visibility $\Vcal(k)$ and its gradient, we can readily design a projected gradient descent algorithm to find the optimal message intensity $\lambda_u(t)$
in the visibility shaping problems described in Section~\ref{sec:formulation}. 
Note that, since our optimization problems are convex, there is a unique optimum and convergence is guaranteed. Moreover, for the projection step, we solve a quadratic program, minimizing the 
distance to the feasible polytope. 
Algorithm~\ref{algorithm1} summarizes the overall algorithm.

\vspace{-1mm}
\section{Experiments}
\label{sec:experiments}
%
\begin{figure*}[t]
\centering
\begin{tabular}{c c c c}
       		 \hspace{-3mm}\includegraphics[width=0.25\textwidth]{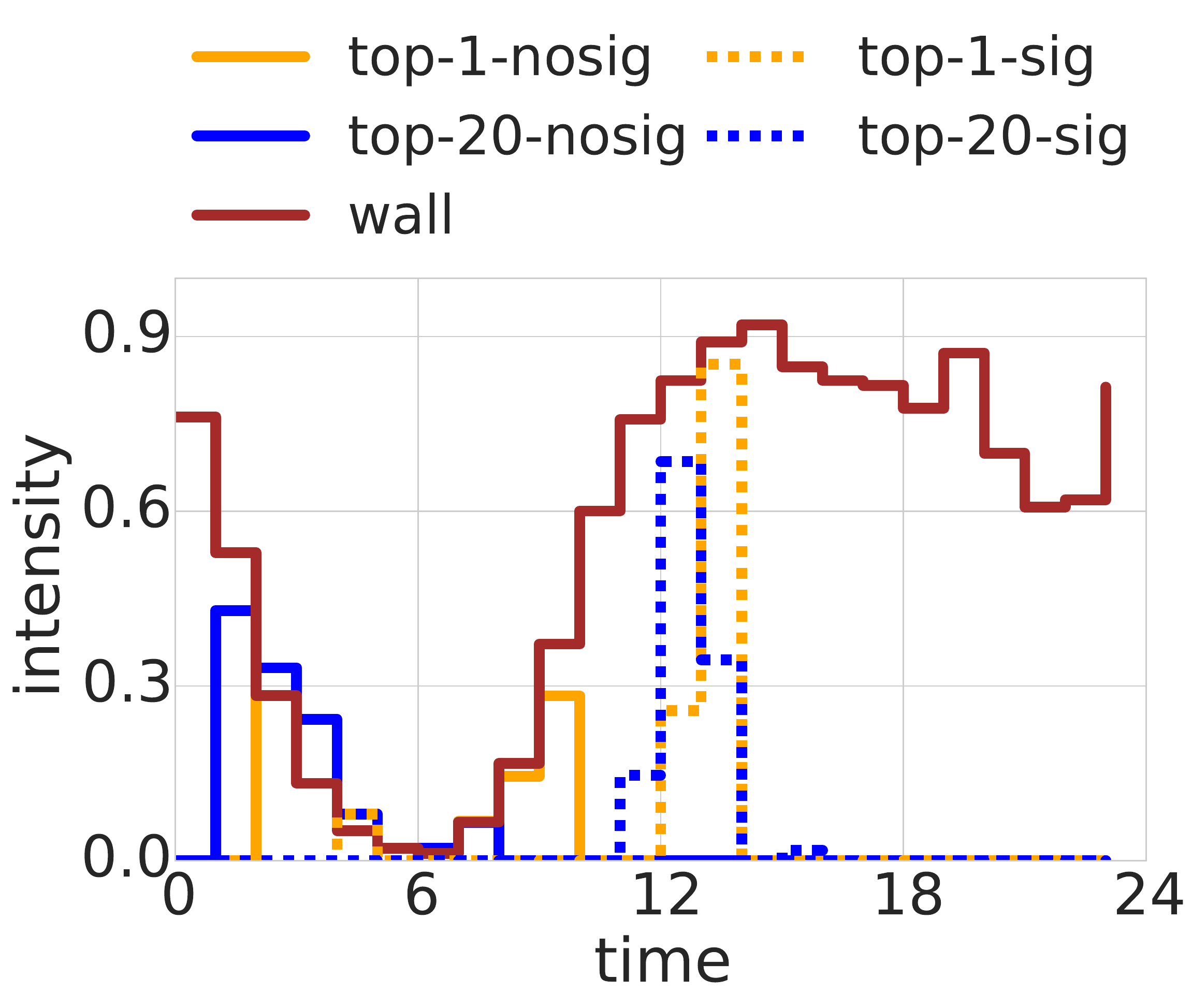} &
		 \hspace{-3mm}\includegraphics[width=0.24\textwidth]{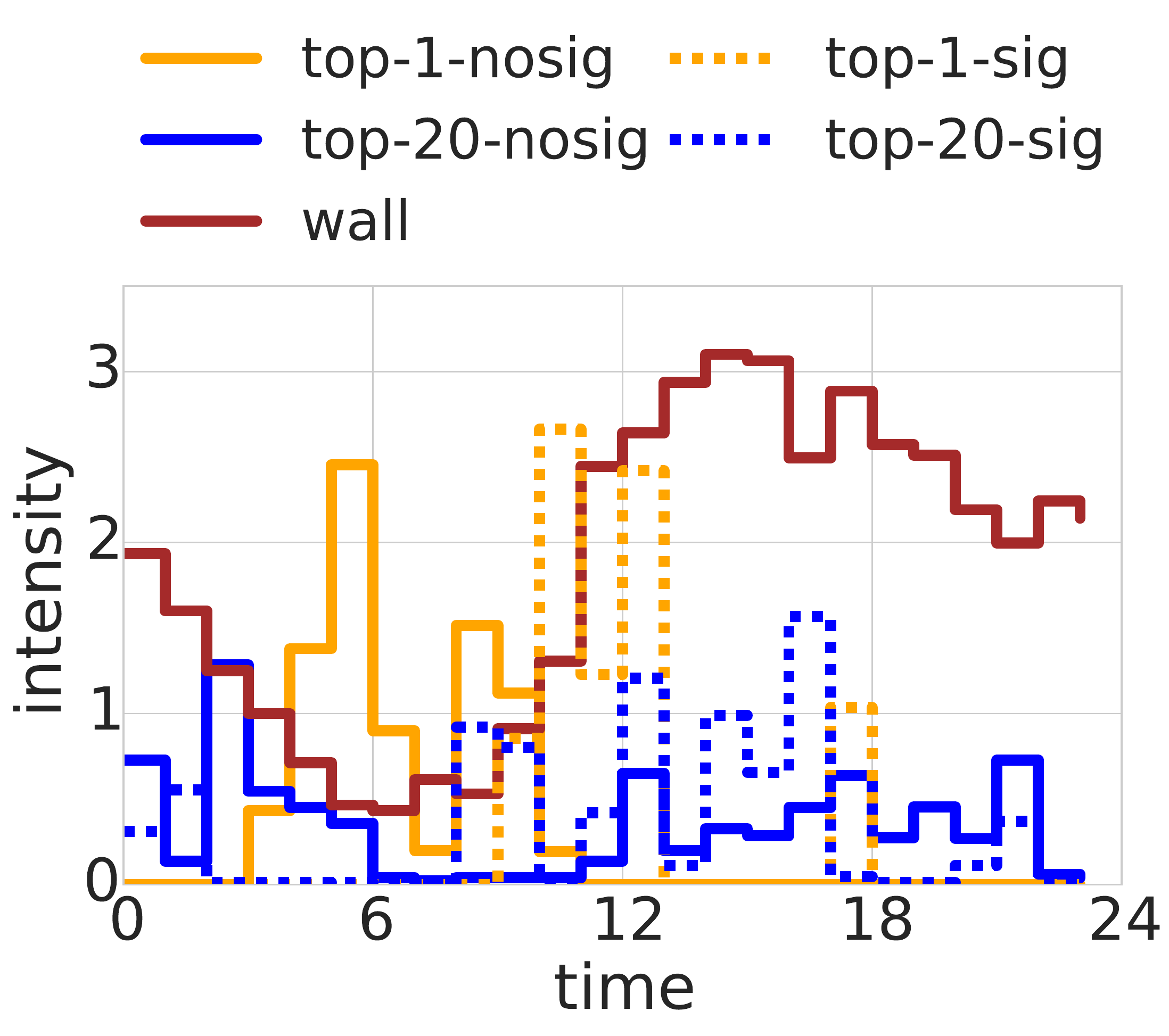} &
		 \hspace{-3mm}\includegraphics[width=0.24\textwidth]{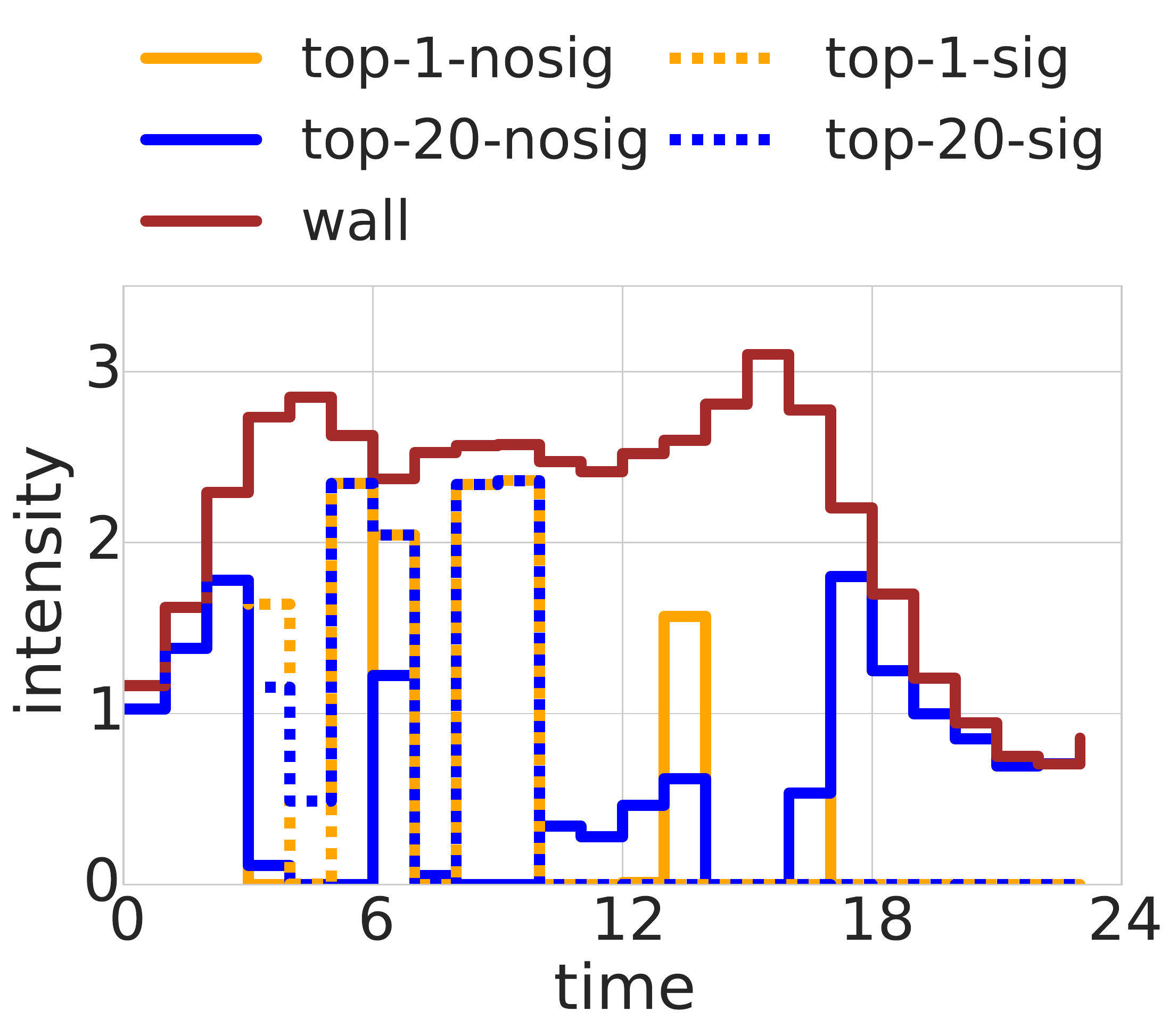} &
		 \hspace{-3mm}\includegraphics[width=0.25\textwidth]{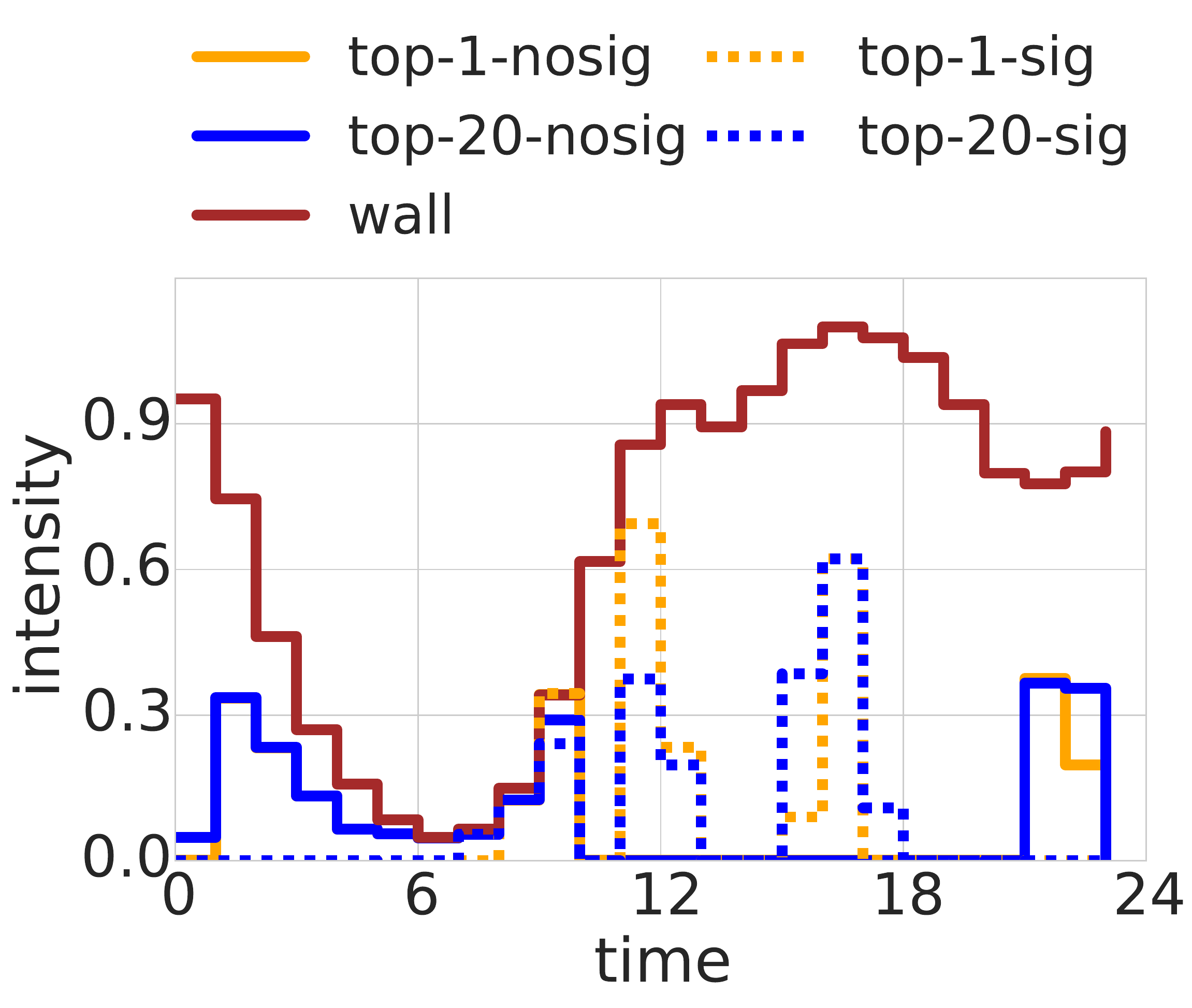} \\
       		 \hspace{-3mm}\includegraphics[width=0.25\textwidth]{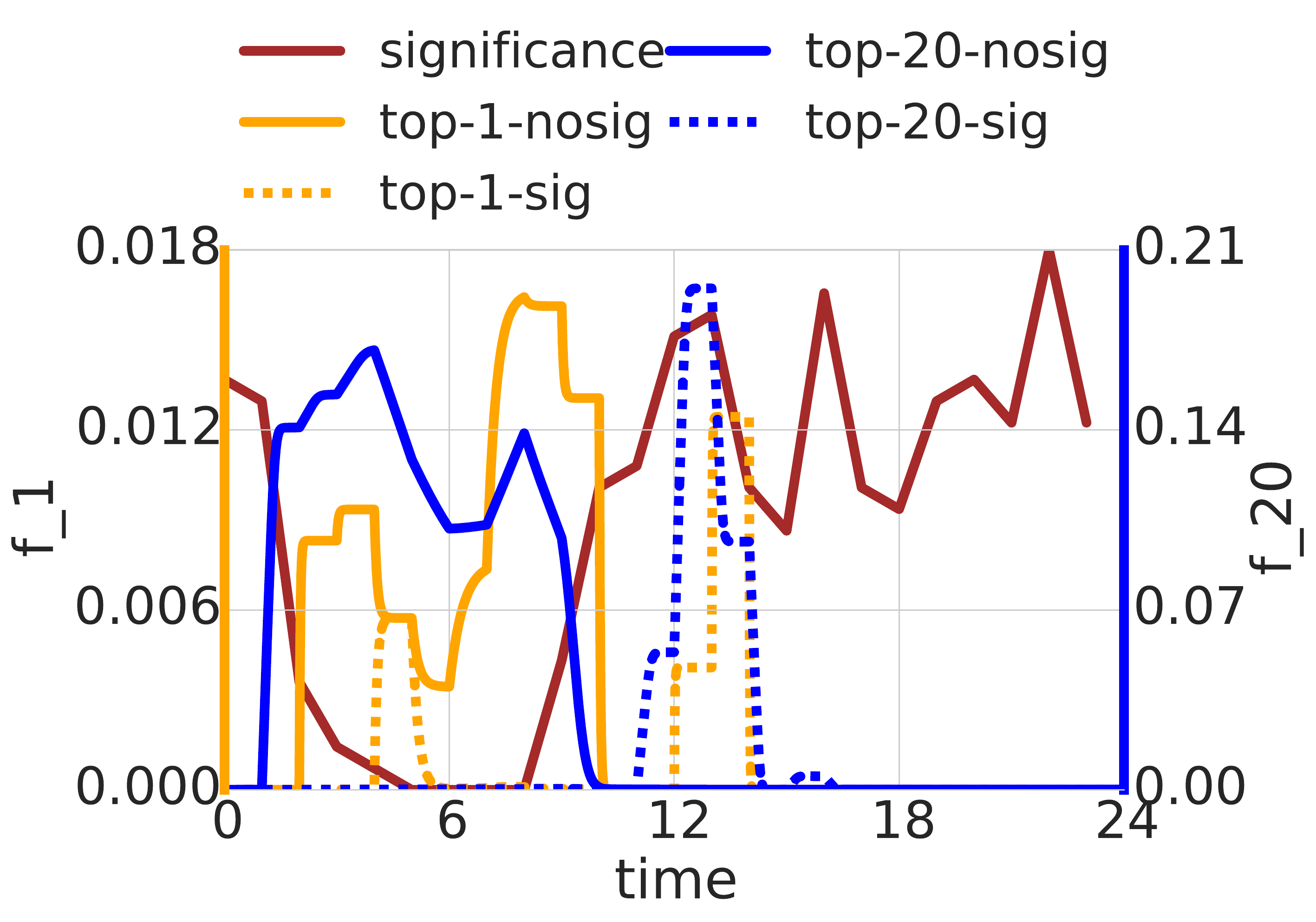} &
		 \hspace{-3mm}\includegraphics[width=0.25\textwidth]{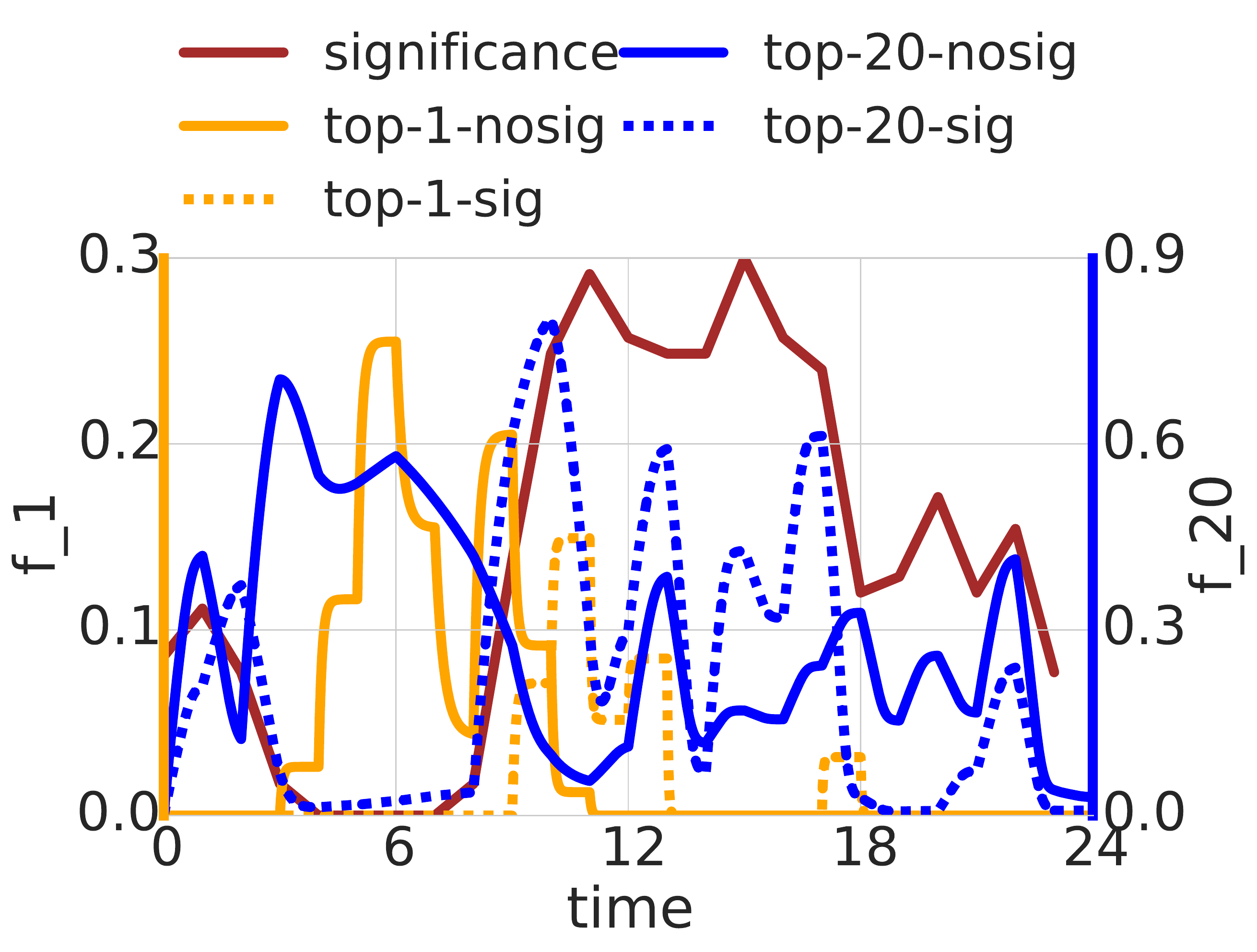} &
		 \hspace{-3mm}\includegraphics[width=0.25\textwidth]{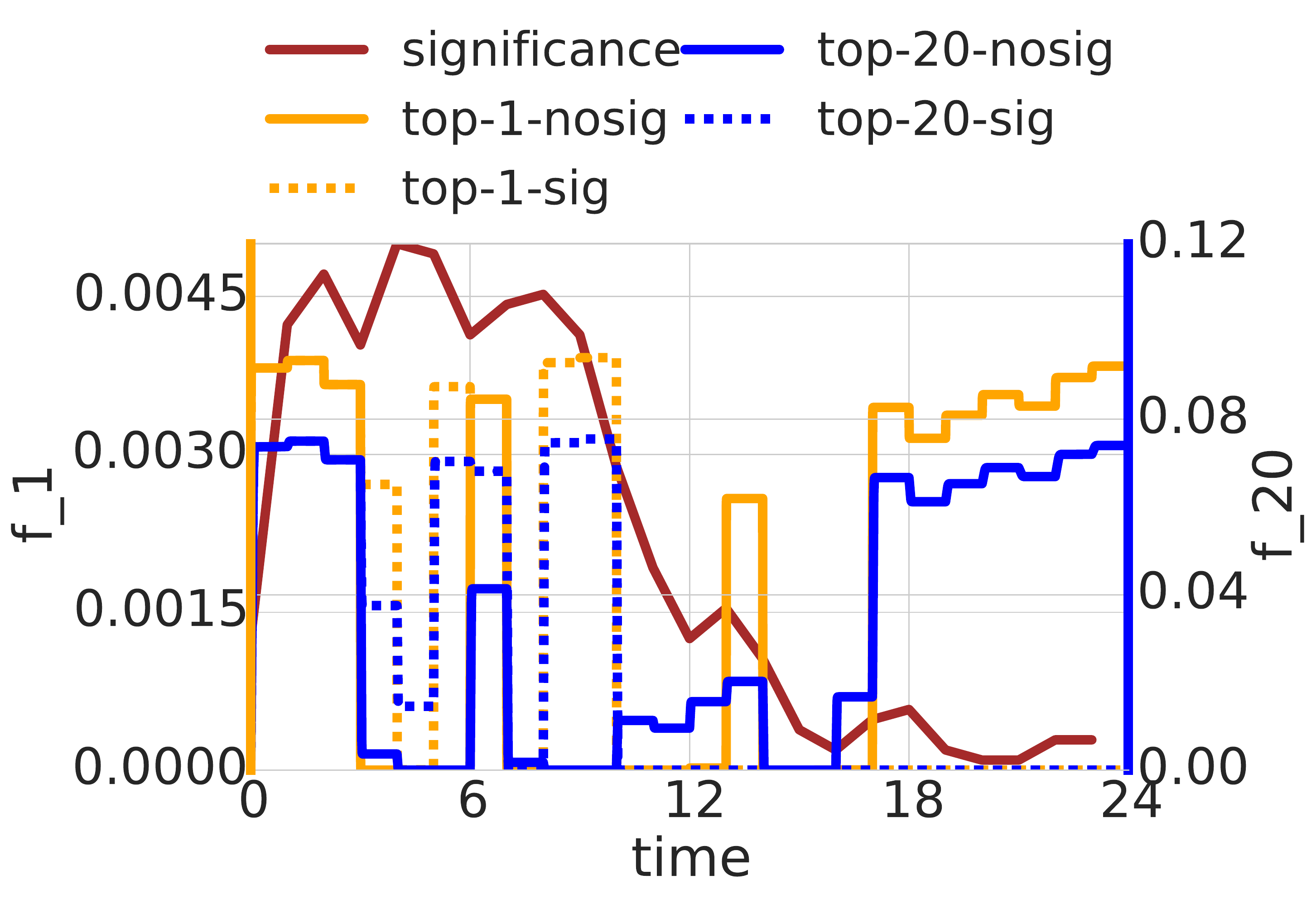} &
		 \hspace{-3mm}\includegraphics[width=0.25\textwidth]{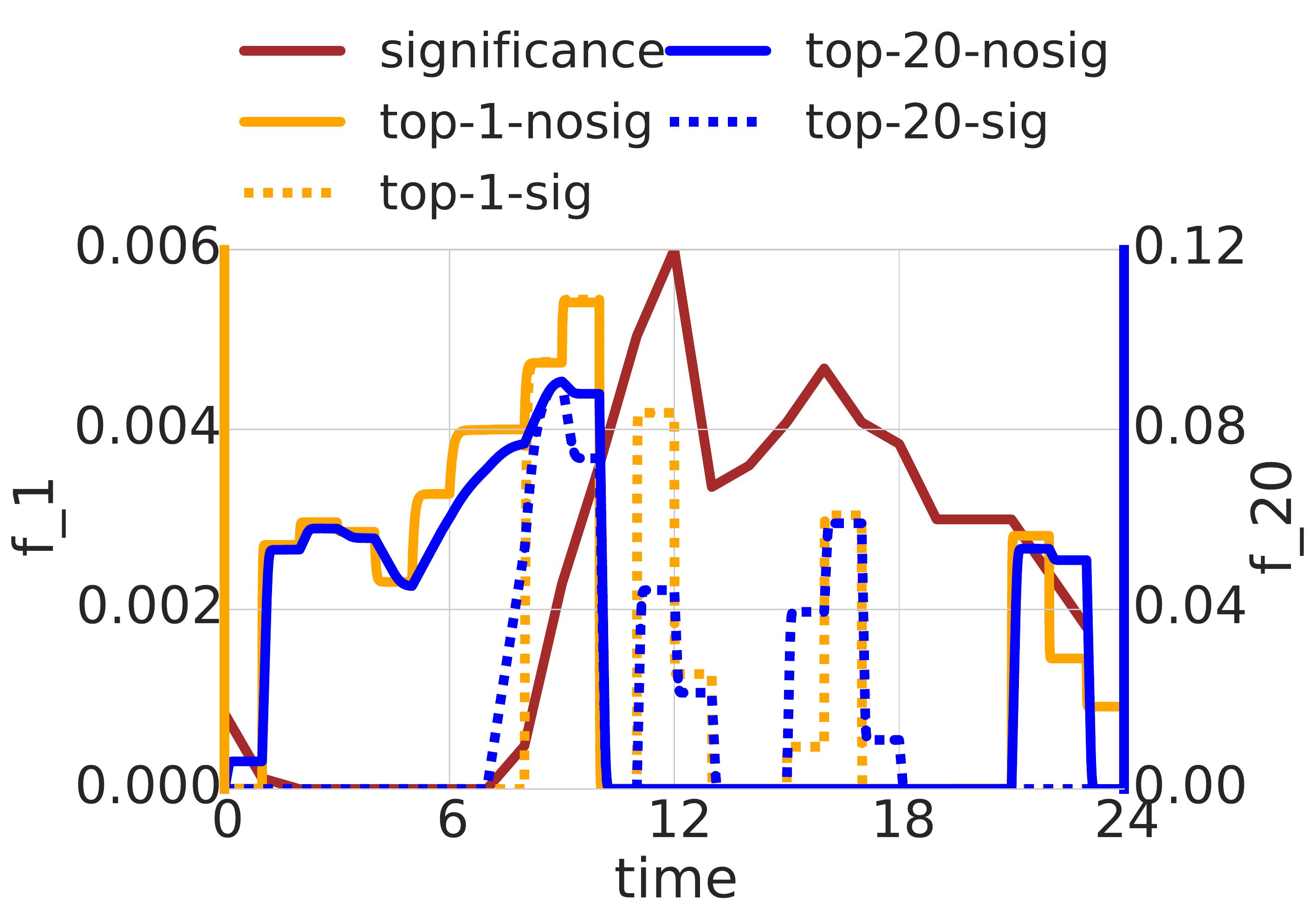} \\
      		 \hspace{-3mm}\includegraphics[width=0.25\textwidth]{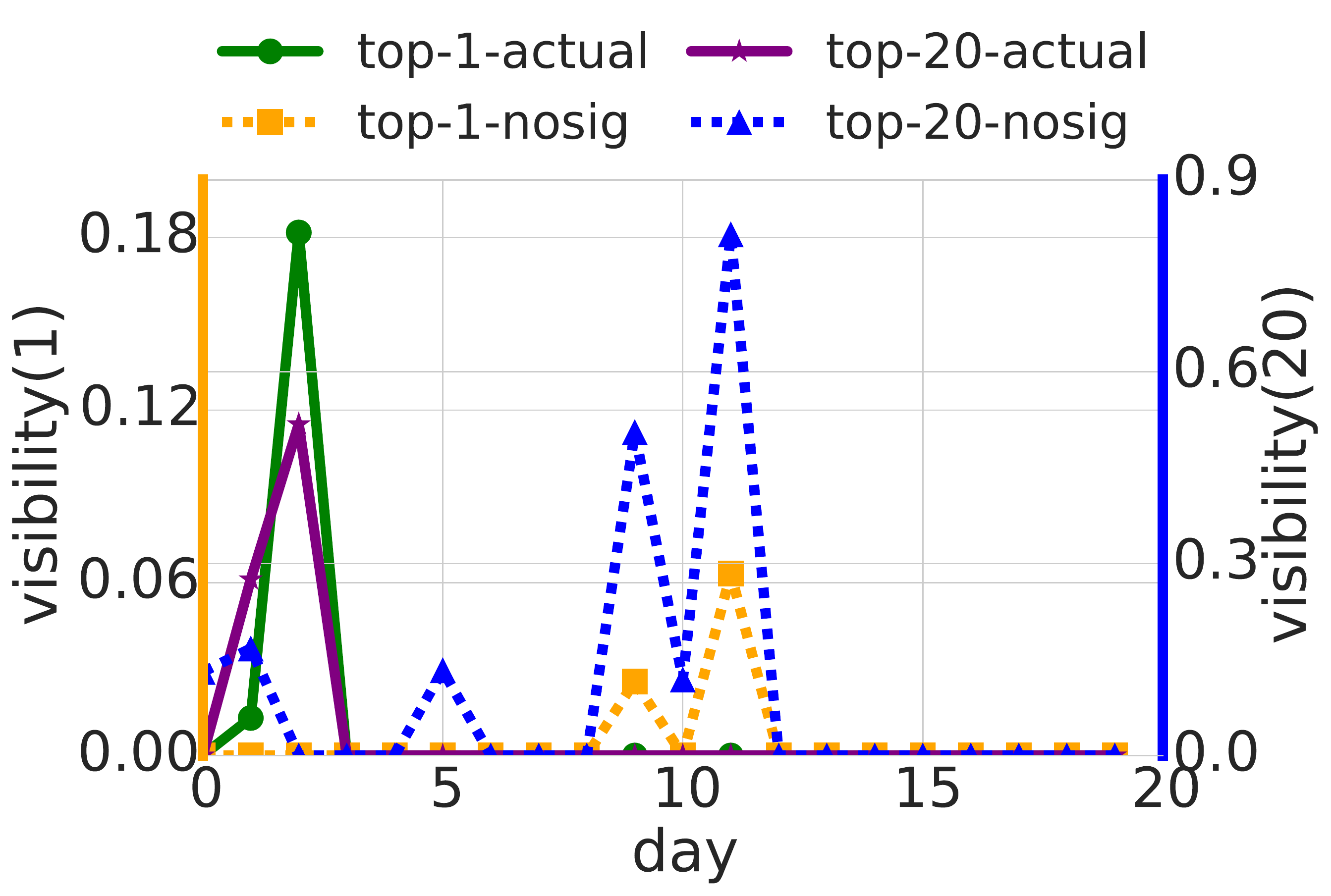} &
		 \hspace{-3mm}\includegraphics[width=0.25\textwidth]{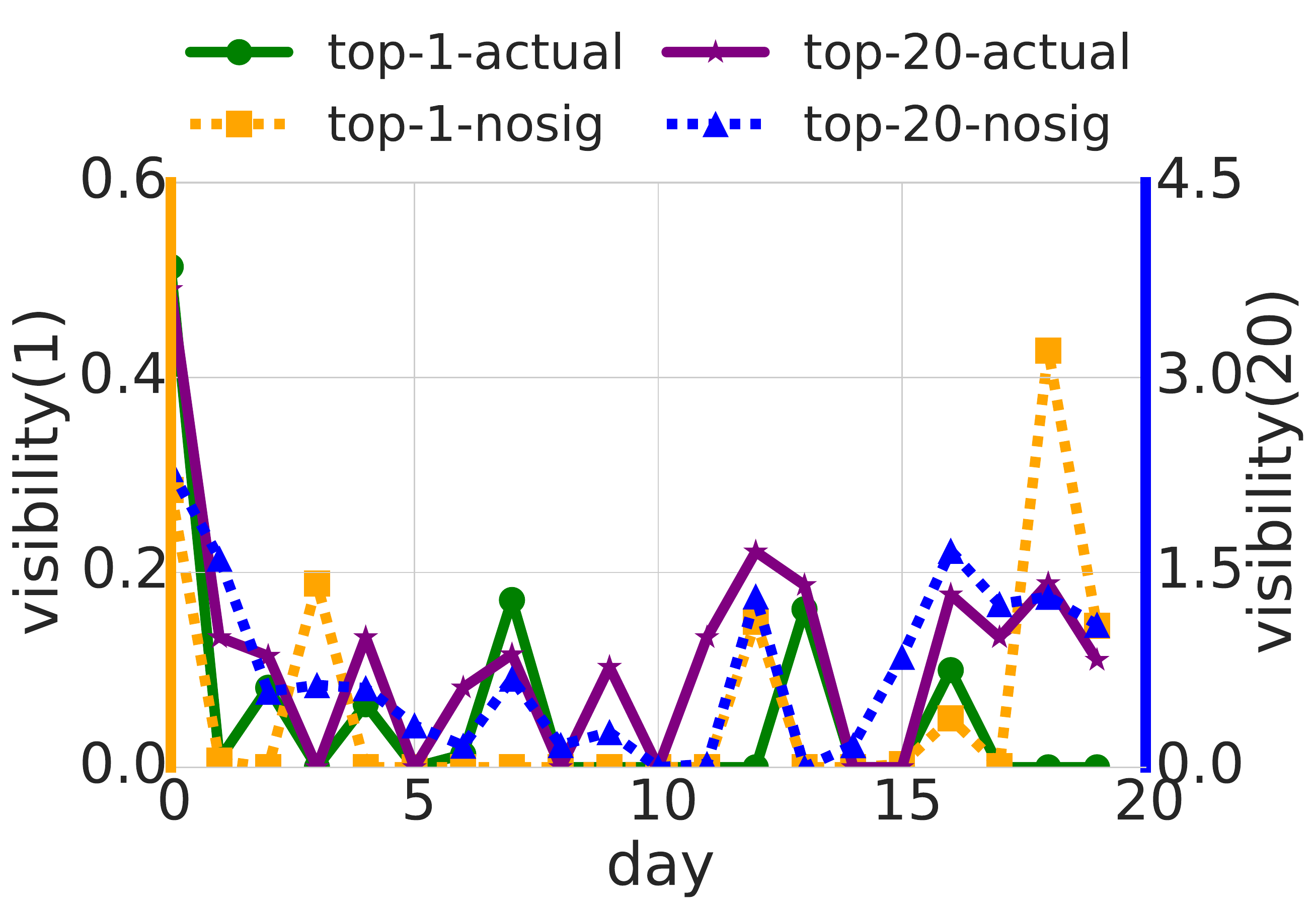} &
		 \hspace{-3mm}\includegraphics[width=0.25\textwidth]{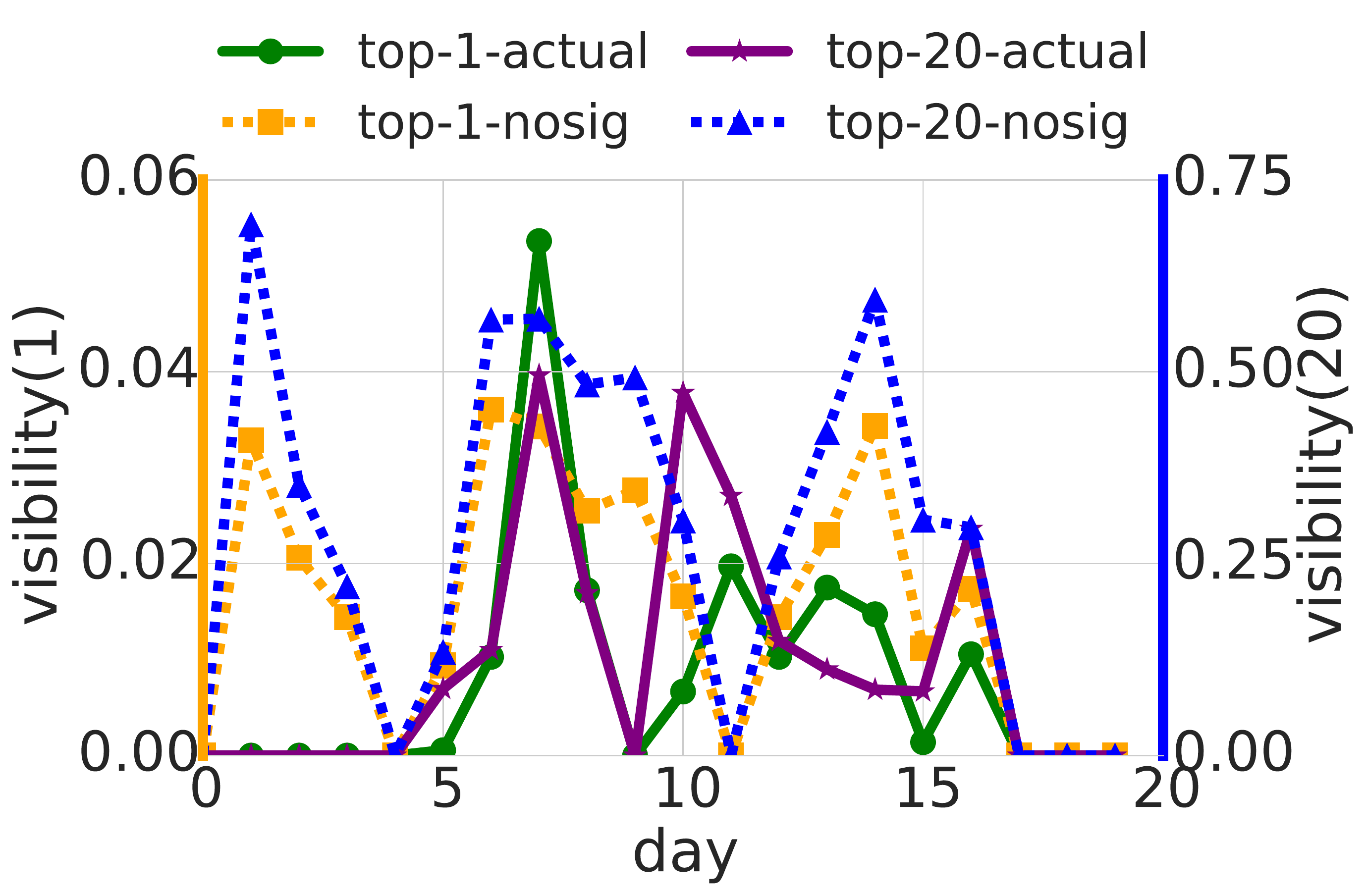} &
		 \hspace{-3mm}\includegraphics[width=0.25\textwidth]{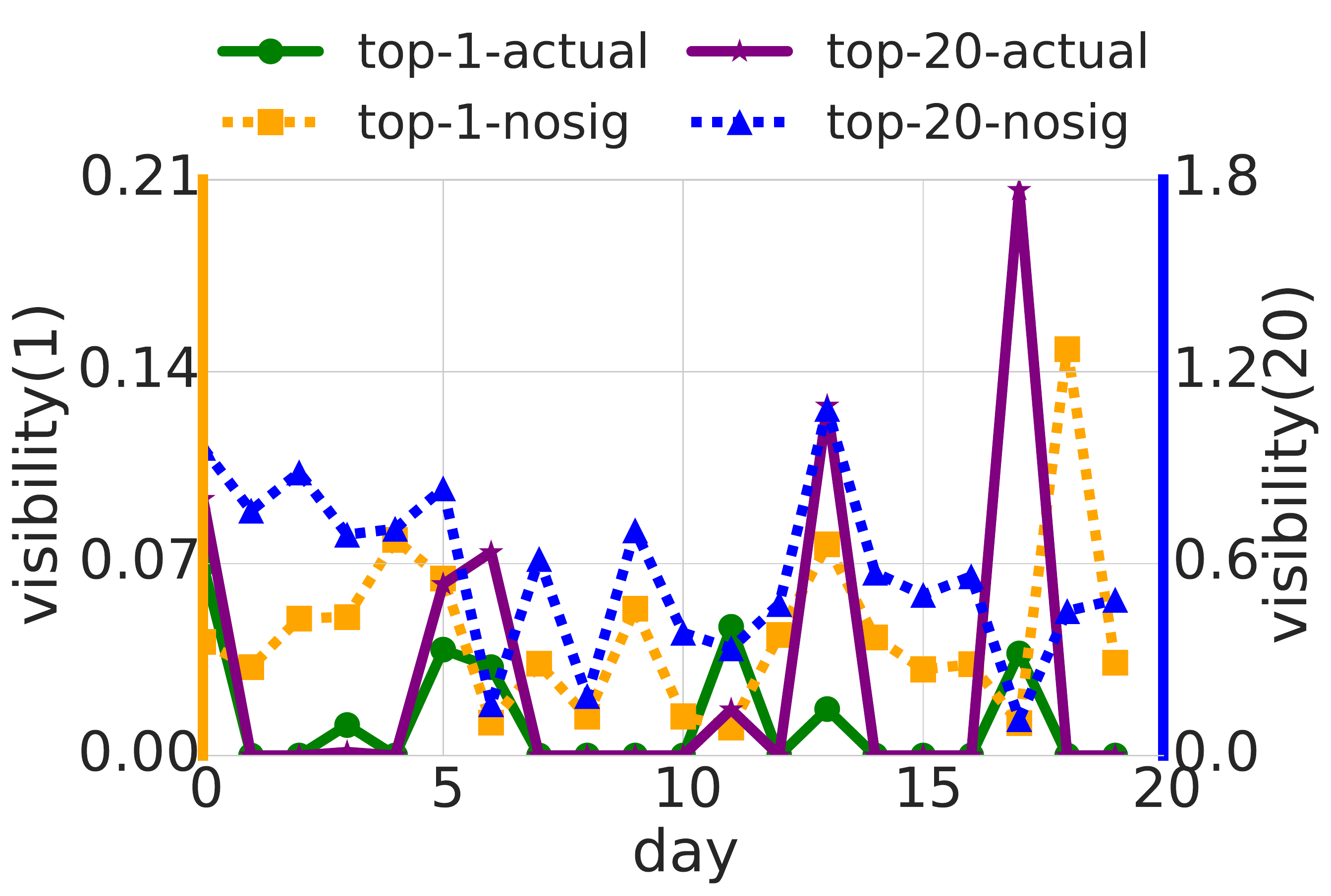} 
\end{tabular}
\vspace{-3mm}
\caption{Intensities and top-$k$ probabilities and visibilities. We focus on four broadcasters (one per column) and solve the AVM problem for one of their followers, picked at random. 
The first row shows the follower'{}s timeline intensity ($\mu(t)$, in brown) fitted using events from the training set, and the optimized intensities, as given by our framework, 
that maximize visibility for $k$$=$$1$, $20$ on the training set with and without significance ($\lambda^{*}(t)$, in solid and dashed yellow and blue, respectively).
The second row shows the top-$k$ probability for the optimized intensities with and without significance for $k$$=$$1$, $20$ ($f^{*}_k(t)$, in solid and dashed yellow and blue) 
as well as the follower'{}s significance ($s(t)$, in brown).
The third row compares the average visibility achieved by the optimized intensities without significance for $k$$=$$1$, $20$ ($\Vcal^{*}(k)$, in yellow and blue) to the average 
visibility achieved by the broadcaster'{}s posting activity ($\Vcal(k)$, in green and purple) on a held-out set.
}
\label{fig:explor-avmp}
\vspace{-2mm}
\end{figure*}

\xhdr{Dataset description and experimental setup}
We use data gathered from Twitter as reported in previous work~\cite{cha2010measuring}, which comprises the following 
three types of information: profiles of $52$ million users, $1.9$ billion directed follow links among these users, and $1.7$ billion public tweets posted 
by the collected users. The follow link information is based on a snapshot taken at the time of data collection, in September 2009.
Here, we focus on the tweets published during a six and a half month period, from February 2, 2009 to August 13, 2009. 
In particular, we sample $10$,$000$ users uniformly at random as broadcasters and record all the tweets they posted. Moreover, 
for each of these broadcasters, we track down all their followers and record all the tweets they posted as well as reconstruct 
their \emph{true} timelines by collecting all the tweets published by the people they follow.

In our experiments, we use the first three and a half month period, from February 2 to May 13 to fit the piecewise constant intensities of the 
followers'{} timelines and the followers'{} significance, which we use in our convex visibility shaping framework. 
Here, the follower'{}s significance is the probability that she is on-line, estimated as a piecewise (hourly) constant probability from the tweets-retweets the follower 
posted -- if a follower tweeted or retweeted in an hour, we assume it was on-line during that hour.
Then, we use the last three month period, from May 14 to August 13, to evaluate our framework.
We refer to the former period as the training set and the latter as the test set.
We ex\-pe\-ri\-ment both with $T$$=$$24$ hours ($M$$=$$24$, $\Delta$$=$$1$ hour) and $T$$=$$7$ days ($M$$=$$24 \times 7$, $\Delta$$=$$1$ 
hour), and set the budget $C$ to be equal to the average number of tweets per $T$ the broadcaster posted in the training period.
%
%
%

\xhdr{Evaluation schemes}
Throughout this section, we use three different evaluation schemes, with an increasing resemblance to a real world scenario:

\emph{Theoretical objective}:  We compute the theoretical value of the utility using the broadcaster intensity under study, be it the (optimal) intensity given by our convex visibility shaping
framework, the intensity given by an alternative baseline, or the the broadcaster'{}s (true) fitted intensity.

\emph{Simulated objective}:  We simulate events both from the broadcaster intensity under study and each of the followers'{} timeline fitted intensities. Then, we estimate empirically the overall
utility based on the simulated events. We perform $100$ independent simulation runs and report the average and standard error (or standard deviation) of the utility.

\emph{Held-out data}: We simulate events from the broadcaster intensity under study, interleave these generated events on the true followers'{} timelines recorded as test set, and compute
the corresponding utility. We perform $10$ independent simulation runs and report the average and standard error (or standard deviation) of the utility.

\xhdr{Intensities, top-$k$ probabilities and visibilities}
We pay attention to four broadcasters, picked at random, and solve the average visibility maximization task for one of their followers, also picked at random.
Our goal here is to shed light on the influence that the follower'{}s timeline intensity and significance have on the optimized broadcaster'{}s intensity as well
as its corresponding visibility and top-$k$ probability for different values of $k$.
Figure \ref{fig:explor-avmp} summarizes the results, which show that (i) including the significance in the visibility definition shifts the optimized intensities away 
from the times in which the followers are not online (first row); (ii) the optimized intensities typically achieve a higher average visibility than the one achieved 
by the broadcaster'{}s true posting activity on a held-out set (third row); and (iii) the optimized intensities are more concentrated in time for $k=1$ (first row) and
achieve a higher average visibility and top-$k$ probability for $k=20$ (second and third row).

\xhdr{Solution quality}
In this section, we perform a large scale evaluation of our framework across all $10$,$000$ broadcasters in terms of the three evaluation schemes 
described above and compare its performance against several baselines.
Here, we consider the definition of visibility that incorporates significance since, as argued previously, may lead to more effective broadcasting
strategies\footnote{We obtain qualitatively similar results if we omit the significance in the definition of visibility. Actually, in such case, our framework 
beats the baselines by a greater margin.}.

In the average visibility maximization task, we compare our framework with three heuristics, in which the broadcaster distributes 
the available budget uniformly at random (RAVM), proportionally to $\sum_{i=1}^n \mu_i(t)$ (IAVM) and proportionally to 
$\sum_{i=1}^n s_i(t) \mu_i(t)$ (PAVM), respectively.
In the minimax visibility maximization task, we also compare with three heuristics. The first two heuristics are similar to two of the ones 
just mentioned for AVM, \ie, the broadcaster distributes the available budget uniformly at random (RMVM) and proportionally to 
$\sum_{i=1}^n \mu_i(t)$ (IMVM).
In the third heuristic, the broadcaster distributes its budget following a greedy procedure: at each iteration $k$, it first finds the user with 
the least visibility given $\lambda^{(k-1)}(t)$ and then solves the average visibility maximization for that user given a budget of $C/n$. 
Finally, it outputs the intensity $\lambda(t) = \sum_{k=1}^{n} \lambda^{(k)}(t)$. The greedy procedure starts with $\lambda^{(0)}(t) = C/M$. 
Additionally, for the held-out comparison, we also compute the actual average intensity that the broadcaster achieved in reality.
\begin{figure}[t]
\centering
\begin{tabular}{c c c}
		   \hspace{-2mm}
		  { \footnotesize \rotatebox{90}{~~~~~Theoretical}} &
       		 \hspace{-2mm}\includegraphics[width=0.3\textwidth]{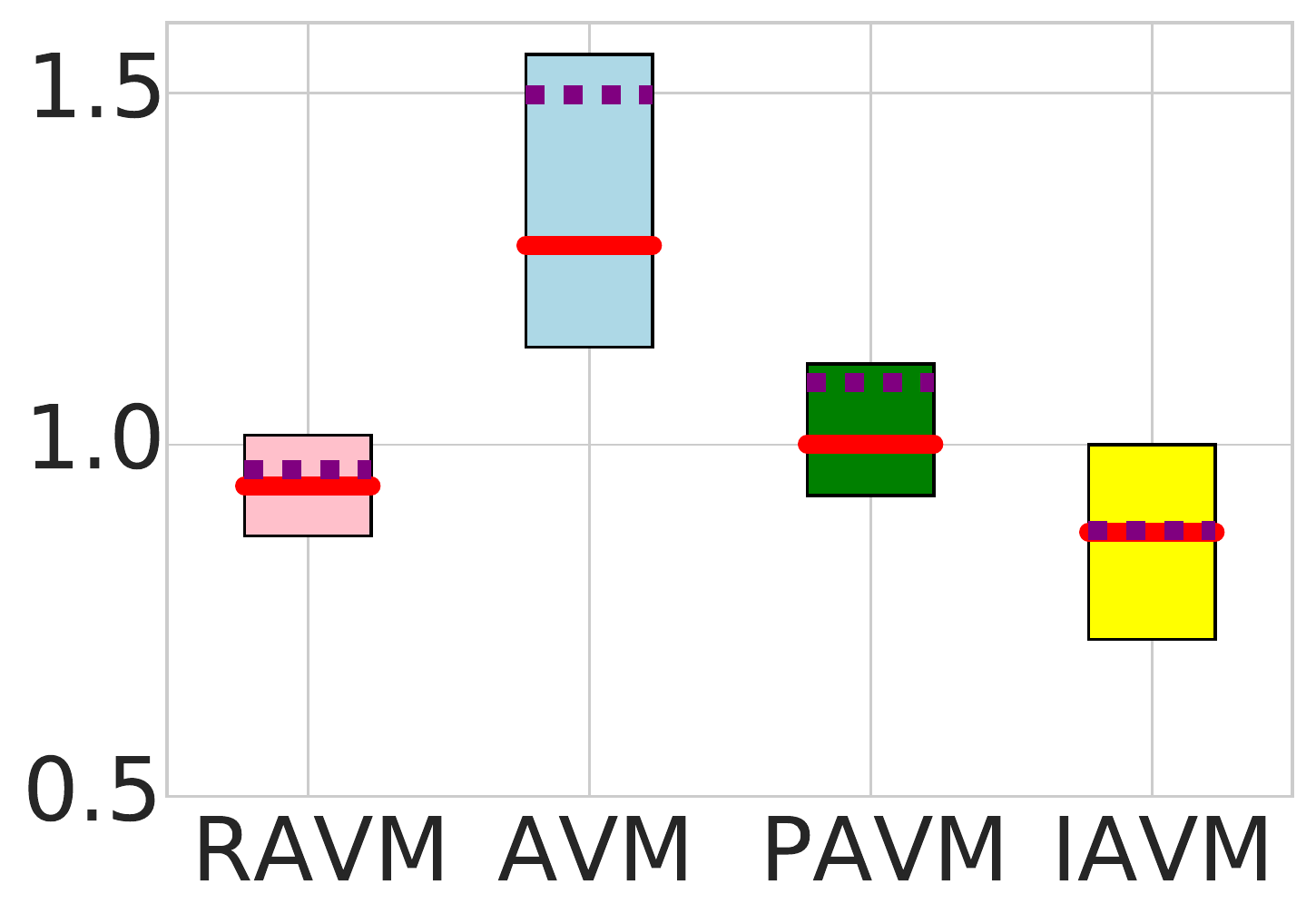} &
		 \includegraphics[width=0.3\textwidth]{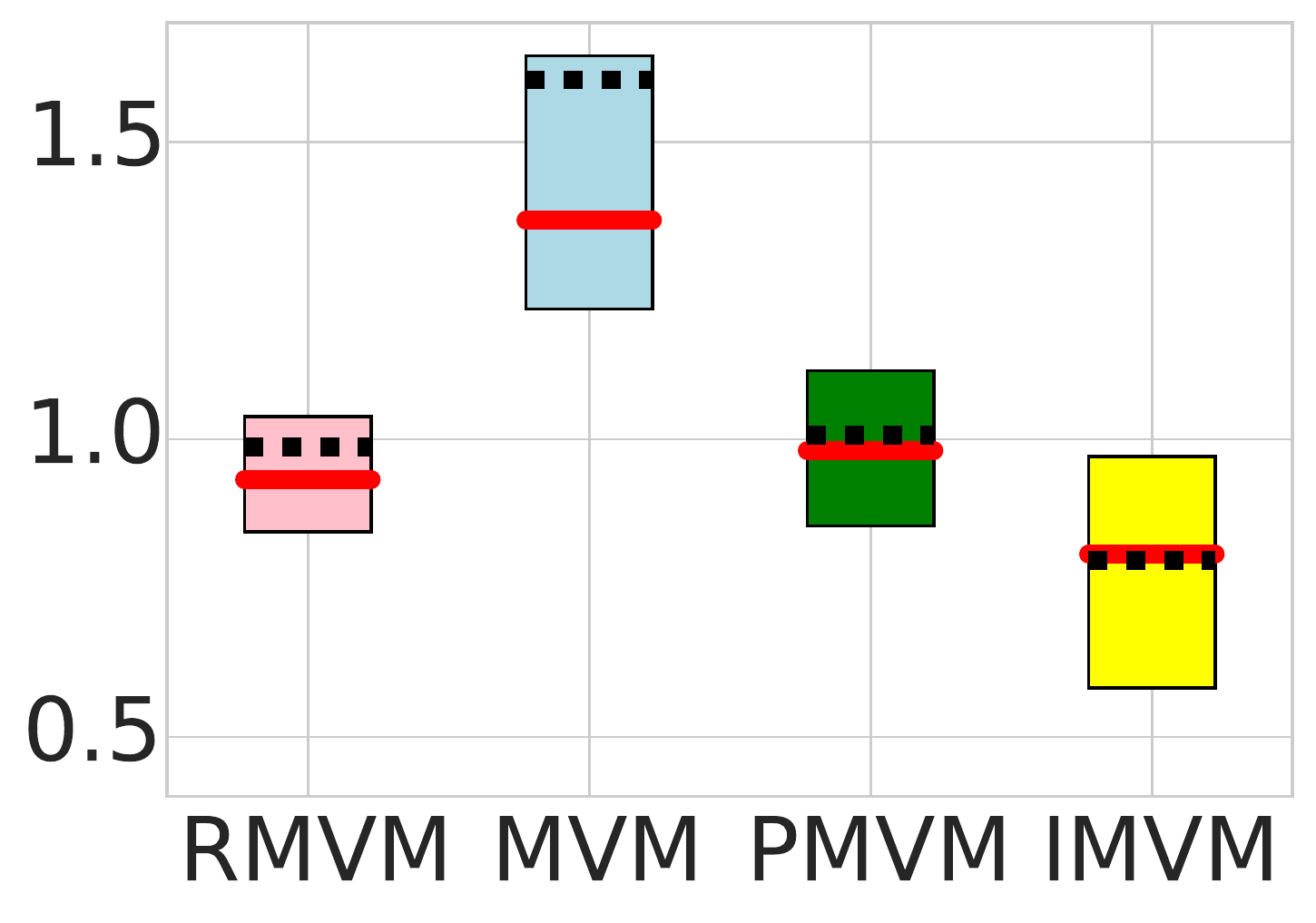} \\
		  \hspace{-2mm}
		 { \footnotesize \rotatebox{90}{~~~~~Simulated}} &
		 \hspace{-2mm}\includegraphics[width=0.3\textwidth]{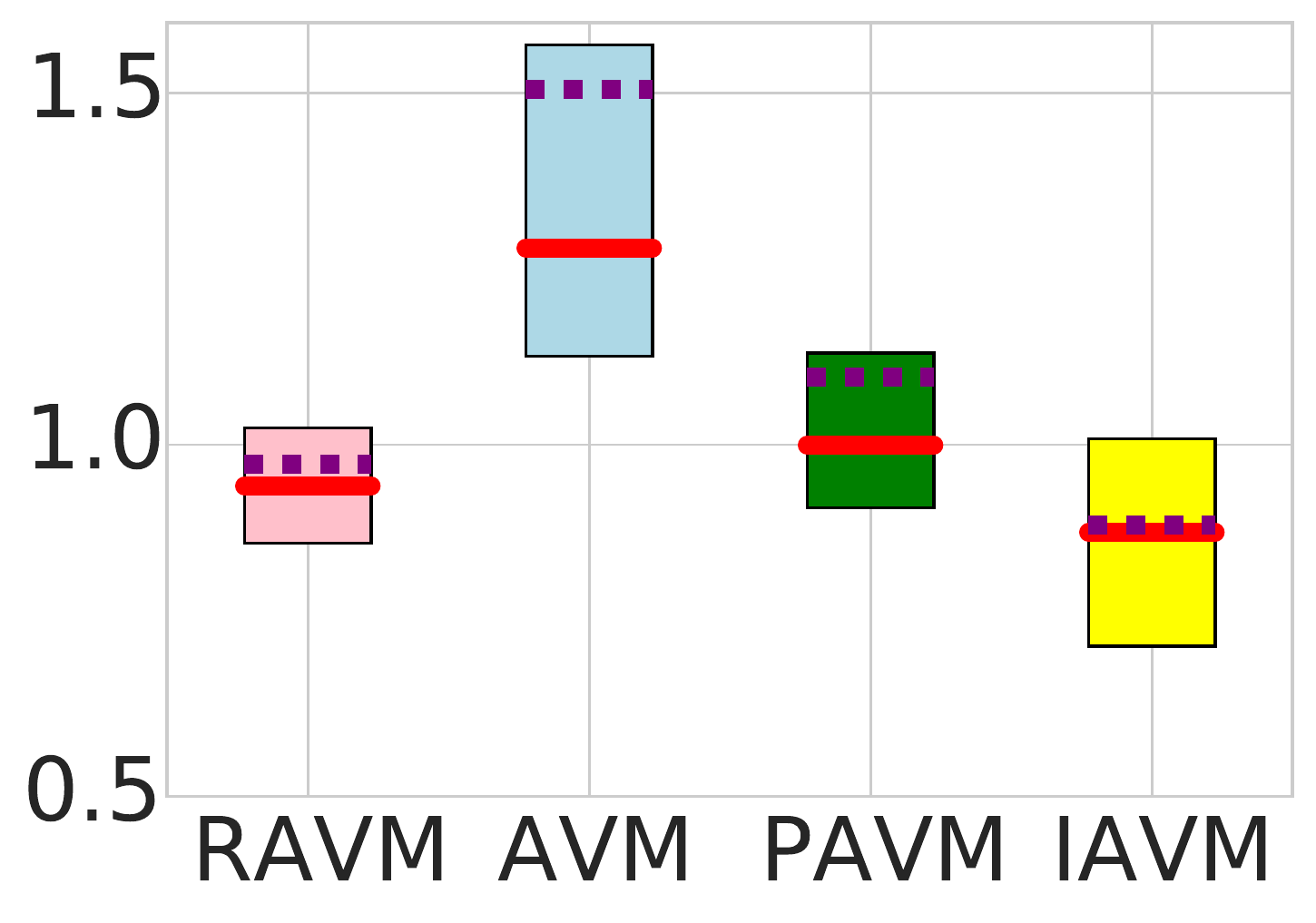} &
		  \includegraphics[width=0.3\textwidth]{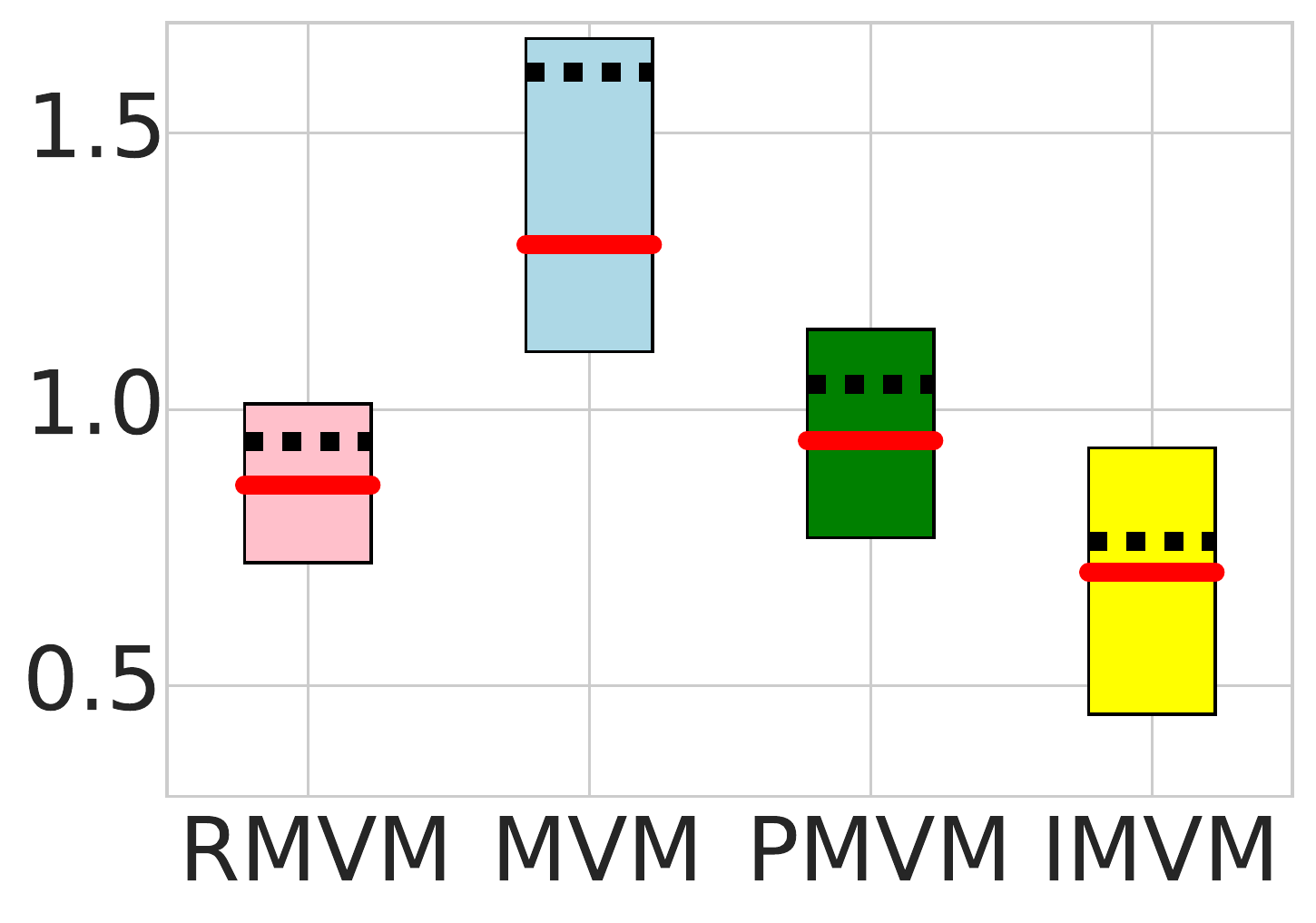} \\
		 \hspace{-2mm}
		  { \footnotesize \rotatebox{90}{~~~~Real Held-out}} &
		 \hspace{-2mm}\includegraphics[width=0.3\textwidth]{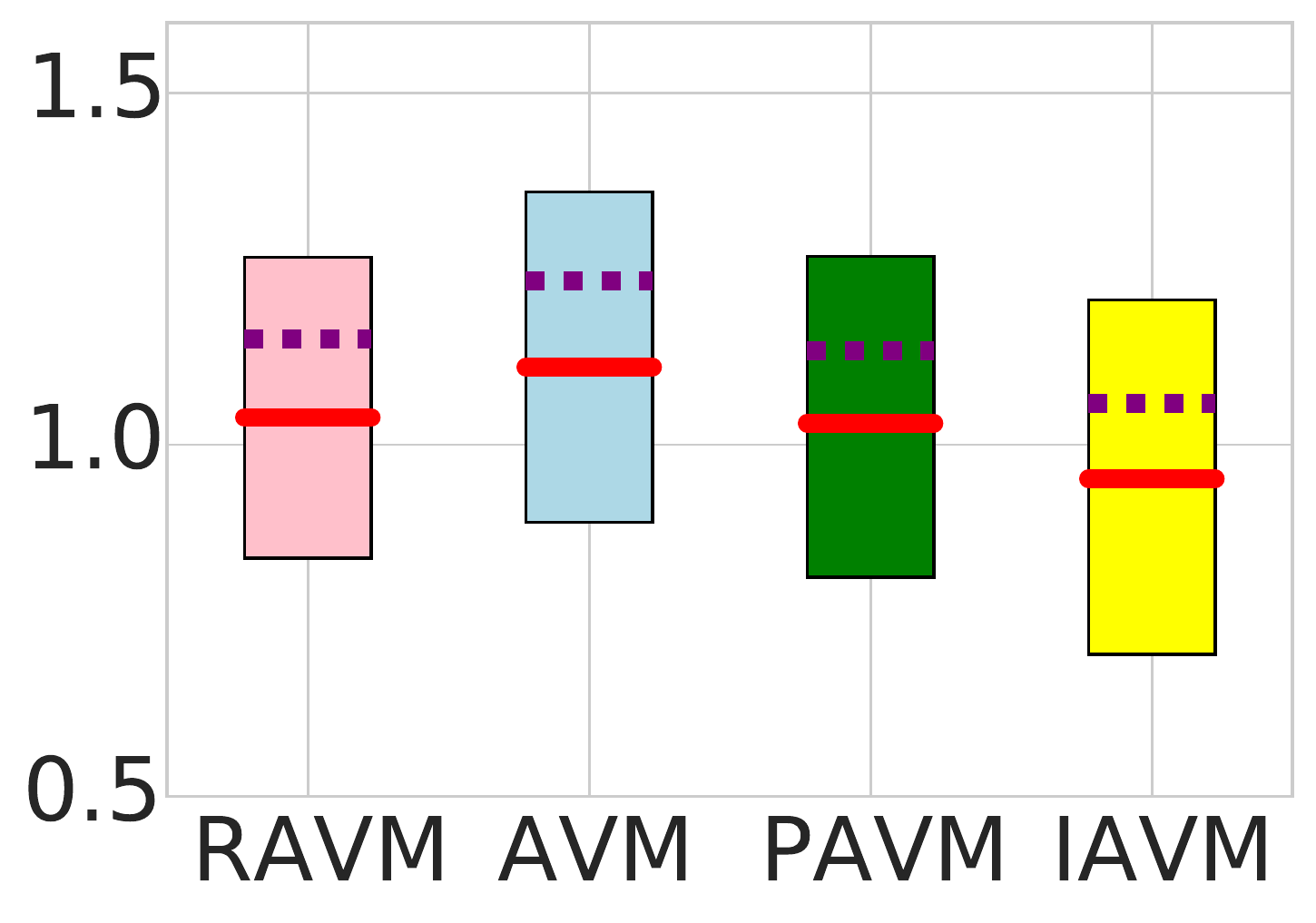} &
		 \includegraphics[width=0.3\textwidth]{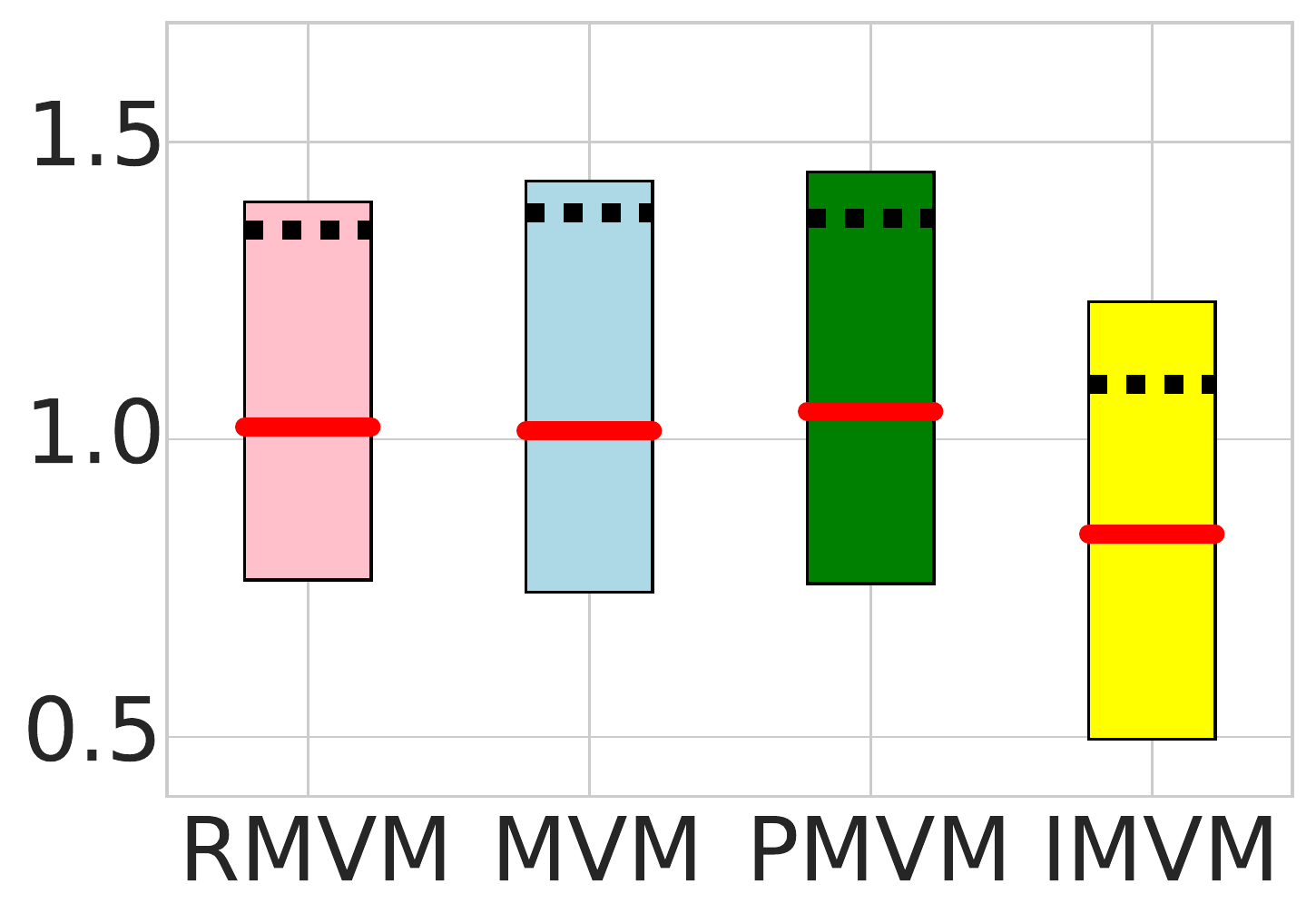} \\
		   & Average Visibility & Minimax Visibility
\end{tabular}
\caption{Visibility shaping for 10{,}000 broadcasters. The left (right) column corresponds to AVM (MVM), evaluated using the theoretical objective (first row),
the simulated objective (second row) and the held-out data (third row). The red (blue) dashed line shows the median (mean) objective popularity and the box 
limits correspond to the 25\%--75\% percentiles.}
\label{fig:compare-all}
\end{figure}

Figure~\ref{fig:compare-all} summarizes the results by means of a box plot, which shows the utilities achieved by our framework 
and the heuristics normalized with respect to the utility achieved by the broadcasters'{} fitted true intensity (by the posts during 
the test set for the third evaluation scheme).
That means, if $y=1$, the optimized intensity achieves the same utility as the broadcaster's{} recorded posts.
For the average visibility maximization task, the intensities provided by our method achieve $1.5$$\times$ higher theoretical objective and $1.3$$\times$ 
higher utility on a held-out set, in average (black dashed line), than the broadcaster'{}s fitted intensities. In contrast, alternatives fail at providing any 
gain, \ie, $y \leq 1$ for a half of the broadcasters.
Finally, for the minimax visibility maximization task, which is significantly harder, the intensities provided by our method achieve $1.6$$\times$ higher 
theoretical objective and $1.4$$\times$ higher average utility on a held-out set, in average (black dashed line), than the broadcaster'{}s fitted intensities.
In this case, although our method outperforms the baselines by large margins in terms of theoretical and simulated objectives, the baselines achieved
almost the same average utility on the held-out set.
The theoretical and simulated objective are almost equal in all cases, as one may have expected.
%
%

%

\xhdr{Solution quality vs. \# of followers}
Figure~\ref{fig:visibility-followers-k}(a) shows the average visibilities achieved by our optimized intensities for the AVM task, normalized by the average visibility that the corresponding broadcasters'{} 
fitted intensities achieve, against number of followers for the same $10$,$000$ broadcasters as above.
Independently of the number of followers, we find that the intensities provided by our method consistently outperform the broadcaster'{}s fitted intensities.

%
\vspace{-2mm}
\xhdr{Visibility vs. $k$}
Figure~\ref{fig:visibility-followers-k}(b) shows the average visibility achieved by our optimized intensities for the AVM task against $k$ for the four broadcasters 
from Figure~\ref{fig:explor-avmp}. 

\xhdr{Scalability}
Figure~\ref{fig:visibility-followers-k}(c) shows that our convex optimization framework easily scale to broadcasters with thousands of followers.
For example, given a broadcaster with $2{,}000$ followers, our algorithm takes $\sim$$250$ milliseconds to find the optimal intensity for the average 
visibility maximization using a single machine with $64$ cores and $1.5$ TB RAM.
\begin{figure}[t]
       \centering
       \begin{tabular}{c c c }
       \includegraphics[width=0.3\textwidth]{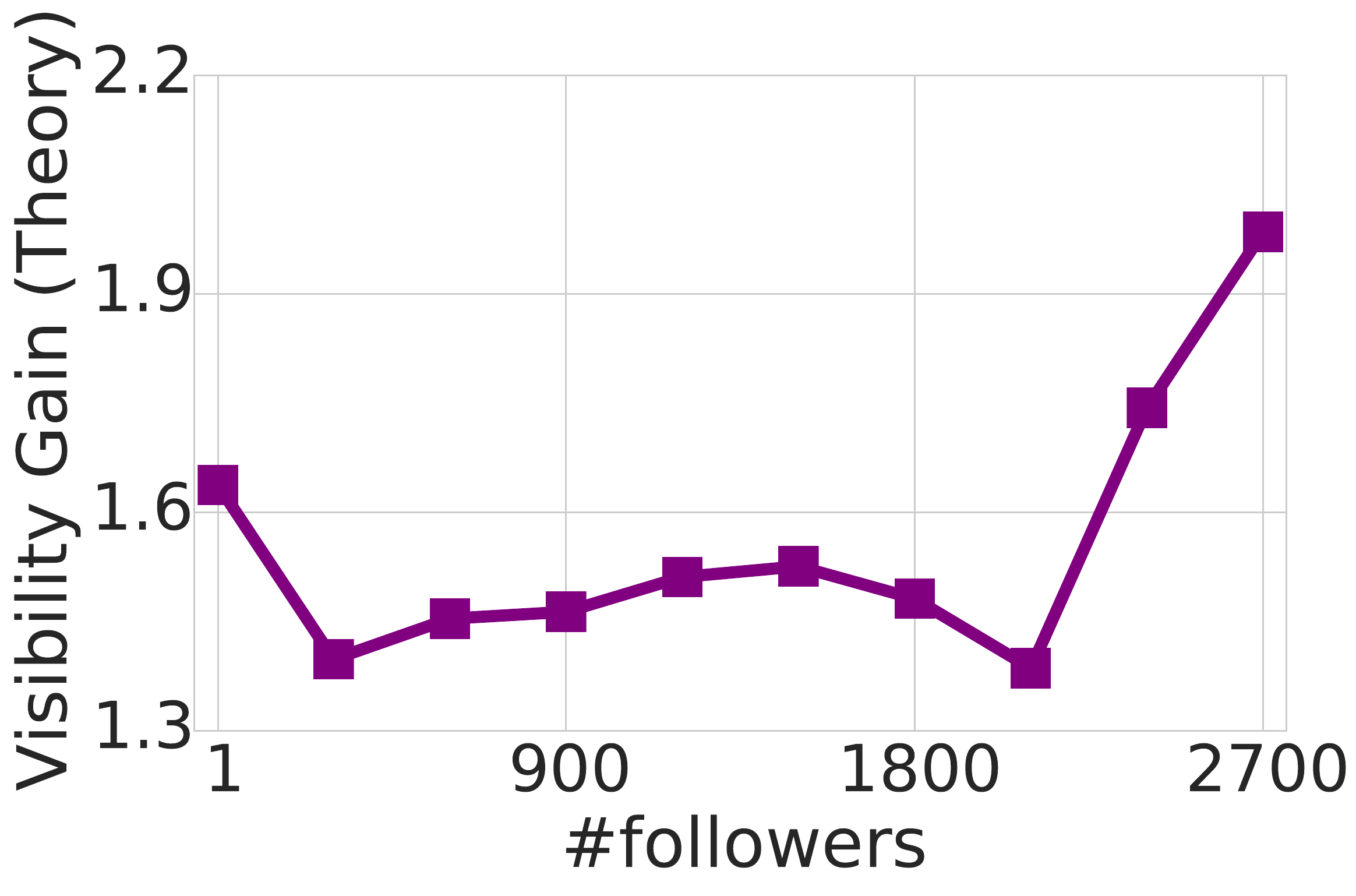} &
       \includegraphics[width=0.3\textwidth]{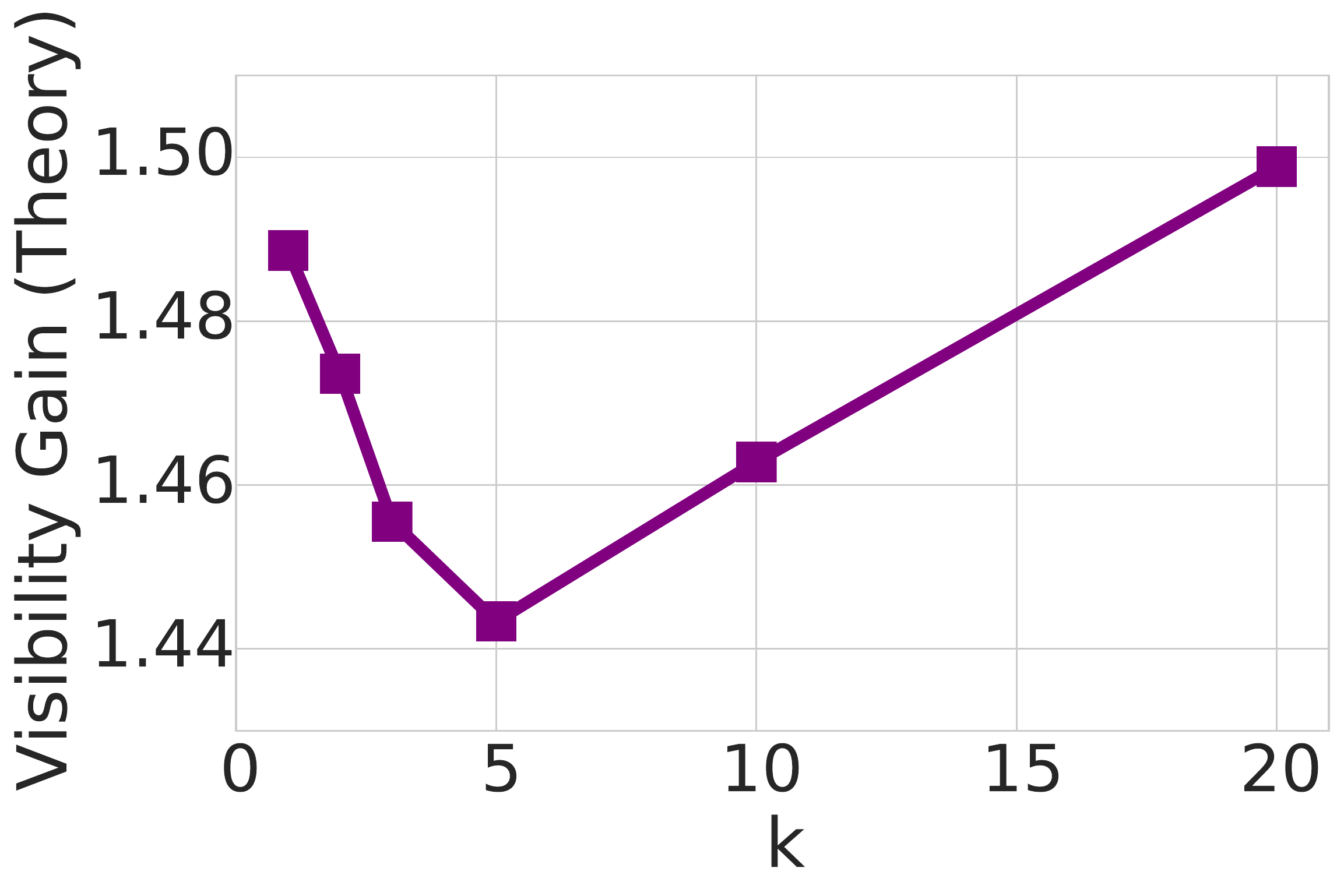} & 
       \includegraphics[width=0.3\textwidth]{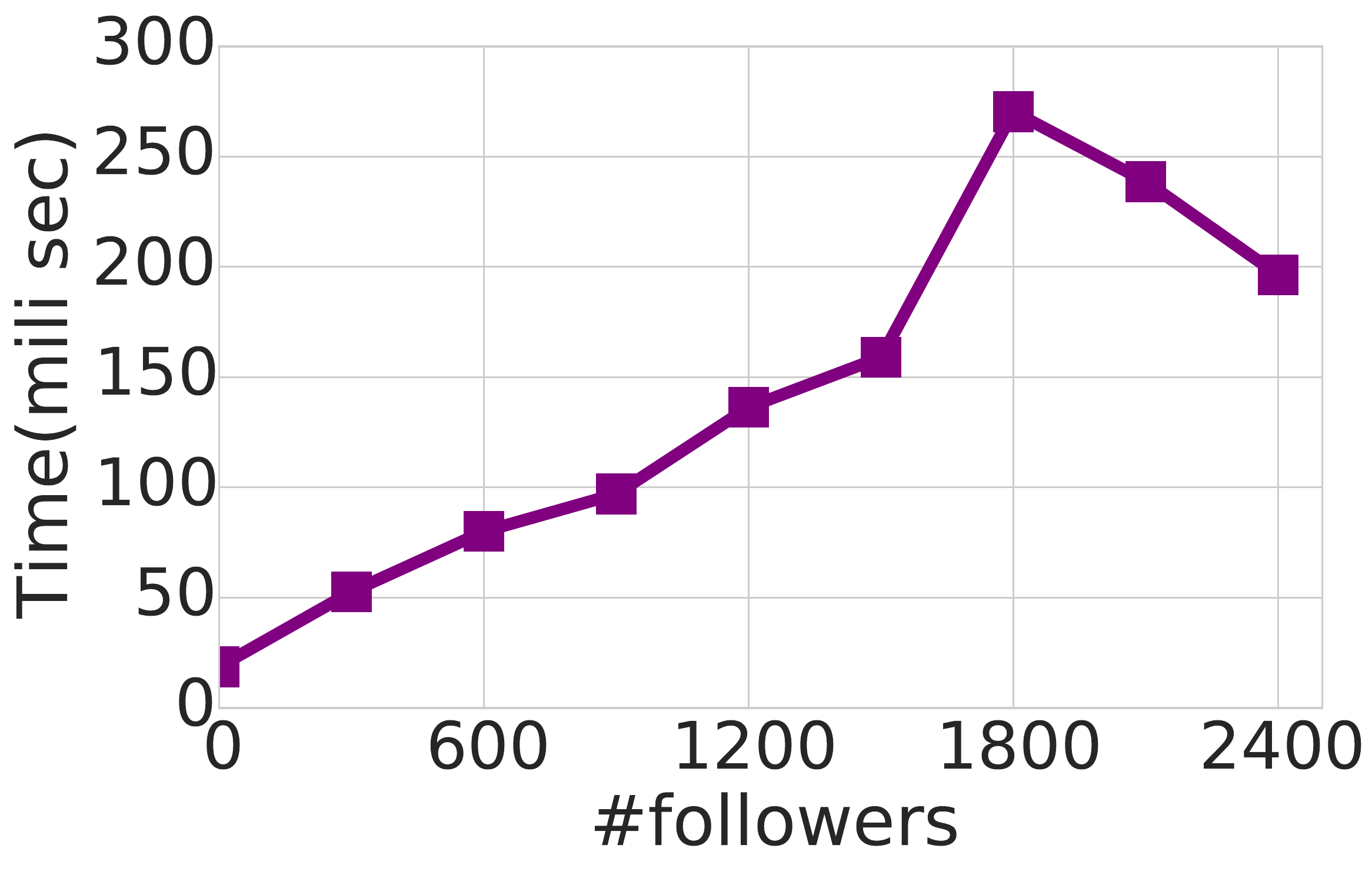}
       \\
       a) Followers & b) $k$ & c) Running time\\
       \end{tabular}
       \caption{Panels show the average visibility against (a) \# of followers and (b) $k$. Panel (c) plots running time.}
       \label{fig:visibility-followers-k}
\end{figure}

\vspace{-1mm}
\section{Conclusions}
\label{sec:conclusions}
In this paper, we developed a novel framework to solve the when-to-post problem, in which we model users'{} feeds and posts
as discrete events occurring in continuous time. Under such continuous-time model, then choosing a strategy for a broadcaster
becomes a problem of designing the conditional intensity of her posting events.
The key technical idea that enables our framework is a novel formula which can link the conditional intensity of an \emph{arbitrary}
broadcaster with her visibility in her followers'{} feeds, defined as the time that at least one post from her is among the most recent $k$ 
received stories in her followers'{} feed.
In addition to the framework, we develop an efficient gradient based optimization algorithm, which allows us to find optimal broadcast
intensities for a variety of visibility shaping tasks in a matter of seconds.
Experiments on large real-world data gathered from Twitter revealed that our framework can consistently make broadcasters' posts more
visible than alternatives.

Our work also opens many interesting venus for future work. For example, we assume that the social network sorts stories in each user'{}s
feed in inverse chronological order. While this is a realistic assumption for some social networks (\eg, Twitter), there are other social networks
(\eg, Facebook) where the feed is curated algorithmically.
It would be very interesting to augment our framework to such cases.
In this work, we model users'{} intensities using inhomogeneous Poisson processes, whose intensities are history independent and deterministic.
Extending our framework to point processes with stochastic and history dependent intensity functions, such as Hawkes processes, would most
likely provide more effective broadcasting strategies.
In this work, we validate our framework on two visibility shaping tasks, average visibility maximization and minimax visibility maximization,
however, there are many other useful tasks one may think of, such as visibility homogenization.
Finally, it would be very interesting to investigate the scenario in which there are several smart broadcasters using our algorithm.

\bibliographystyle{abbrv}
\bibliography{refs}

\vspace{-3mm}
\begin{appendix}
\label{sec:appendix}
\section{Proof of Lemma 1}
\label{sup:proof-lem-f_k}


We will prove this lemma by induction on $k$. For the case $k=1$, $f_1(t)$ satisfies a first-order linear differential equation,
\begin{align}
f_1'(t) =  -(\mu(t)+\lambda(t)) f_1(t) +  \lambda(t),
\end{align}
whose unique solution is
\begin{align} \label{eq:sol-eq-k1}
{f_1}(t) = \int_{0}^{t} \lambda(\tau) e^{-\int_{\tau}^{t} (\lambda + \mu)(x) dx}.
\end{align}
as long as assuming $f_1(t) = 0$. Then, using that $\Gamma [1,\int_{\tau}^{t} \mu(x) dx ] = e^{-\int_{\tau}^{t} \mu(x) dx}$, we can rewrite 
the solution as 
\begin{align} \nonumber
f_1(t) =  \int_{0}^{t} \lambda(\tau)  e^{-\int_{\tau}^{t} \lambda(x) dx} \Gamma  [1,\int_{\tau}^{t} \mu(x) dx] d\tau,
\end{align}
which proves the theorem for $k=1$. Now, in the inductive step we assume the hypothesis is true  for $1,2, \ldots k-1$ 
and we prove it for $k$.
We start by rewriting the differential given by Equation~\ref{eq:prob-diff-eq-top-k} as
\begin{align}
\label{eq:prob-diff-eq-top-k-appen}
{f_k}'(t) + (\mu(t)+\lambda(t)) f_k(t) =  \lambda(t) + \mu(t) f_{k-1}(t),
\end{align}
where, by assumption, $f_{k-1}(t)$ is unique and known. Then, as long as $f_k(0) = 0$, the above differential equation has a 
unique solution and thus we only need to find $f_k(t)$ that satisfies it. 
To do so, we rewrite the right hand side of the differential equation using the inductive hypothesis as
\begin{equation} \nonumber
\lambda(t) + \mu(t) \frac{ \int_{0}^{t}\lambda(\tau)  e^{-\int_{\tau}^{t} \lambda(x) dx}\Gamma[k-1,\int_{\tau}^{t} \mu(x) dx] d\tau} {(k-2)!}, \nonumber
\end{equation} 
which, using $\Gamma[k-1,x] = \frac{1}{k-1}(\Gamma[k,x] + \frac{\partial \Gamma[k,x]}{\partial x})$, can be expressed as
\begin{equation} \label{eq:diff-eq-gamma-property}
\lambda(t) + \mu(t)  \frac{\int_{0}^{t} \left(\lambda(\tau)  e^{-\int_{\tau}^{t} \lambda(x) dx}\Gamma[k,\int_{\tau}^{t} \mu(x) dx] + \frac{\partial \Gamma[k,\int_{\tau}^{t} \mu(x) dx}{\partial \int_{\tau}^{t} \mu(x) dx}\right) d\tau} {(k-1)!}
\end{equation}
Next, we hypothesize that
\begin{equation} \label{eq:hypothesized-fk}
f_k(t) = \frac{\int_{0}^{t} \lambda(\tau)  e^{-\int_{\tau}^{t} \lambda(x) dx}\, \Gamma[k,\int_{\tau}^{t} \mu(x) dx] \, d\tau)}{(k-1)!},
\end{equation}
%
%
and rewrite Eq.~\ref{eq:diff-eq-gamma-property} as
\begin{align}
\lambda(t) +\ \mu(t) f_k(t) + \frac{\mu(t) \int_{0}^{t}\lambda(\tau)  e^{-\int_{\tau}^{t} \lambda(x) dx} \frac{\partial \Gamma[k,\int_{\tau}^{t} \mu(x) dx]}{\partial \int_{\tau}^{t} \mu(x) dx} d\tau }{(k-1)!}. 
\end{align}
Then, by the fundamental theorem of calculus,
\begin{equation}
\frac{\partial \Gamma[k,\int_{\tau}^{t} \mu(x) dx]}{\partial \int_{\tau}^{t} \mu(x) dx} = 
\frac{ \partial \Gamma[k,\int_{\tau}^{t} \mu(x) dx]}{\partial t} \times \frac{\partial t}{\partial \int_{\tau}^{t} \mu(x) dx}
= \frac{\partial \Gamma[k,\int_{\tau}^{t} \mu(x) dx]}{\partial t} \times \frac{1}{\mu(t)} \nonumber
\end{equation}
and thus
\begin{equation}
\label{eq:induction-ode-simple-1}
 \lambda(t) + \mu(t) f_k(t) +\frac{\int_{0}^{t}\lambda(\tau)  e^{-\int_{\tau}^{t} \lambda(x) dx}\frac{\partial \Gamma[k,\int_{\tau}^{t} \mu(x) dx]}{\partial t} d\tau }{(k-1)!}.
\end{equation}
Finally, using that for differentiable functions $g$ and $h$, $gh' = (gh)' - g'h$, we have that
\begin{equation} \nonumber
\begin{split}
\int_{0}^{t}(\lambda(\tau)  e^{-\int_{\tau}^{t} \lambda(x) dx})(\frac{\partial \Gamma[k,\int_{\tau}^{t} \mu(x) dx]}{\partial t}) d\tau 
& = \underbrace{\int_{0}^{t} \frac{\partial (\lambda(\tau)  e^{-\int_{\tau}^{t} \lambda(x) dx}\Gamma[k,\int_{\tau}^{t} \mu(x) dx])}{\partial t}  d\tau}_{(k-1)! (f_k'(t) - \lambda(t))}
\\ & - \underbrace{\int_{0}^{t}  (\frac{\partial \lambda(\tau)  e^{-\int_{\tau}^{t} \lambda(x) dx}}{\partial t})(\Gamma[k,\int_{\tau}^{t} \mu(x) dx]) d\tau}_{(k-1)! \lambda(t) f_k(t)}
\end{split}
\end{equation}
%
%
%
and then we can rewrite Eq.~\ref{eq:induction-ode-simple-1} as
\begin{align}
\lambda(t) + \mu(t) f_k(t) + \frac{(k-1)! (f_k'(t) - \lambda(t)) +  (k-1)! \lambda(t) f_k(t) }{(k-1)!}, \nonumber
\end{align}
which simplifies to
\begin{align}
{f_k}'(t) + (\mu(t)+\lambda(t)) f_k(t). \nonumber \nonumber
\end{align}
This asserts that hypothesized solution for $f_k(t)$ in Eq.~\ref{eq:hypothesized-fk} satisfies Eq.~\ref{eq:prob-diff-eq-top-k-appen}, hence, 
it is the unique solution for $f_k(t)$.

\section{Proof of Theorem 3} \label{sup:concave-f-k}
From Lemma~1, we know that
\begin{align}
 f_k(t) =  \frac{\int_{0}^{t}(\lambda(\tau)  e^{-\int_{\tau}^{t} \lambda(x) dx})\Gamma[k,\int_{\tau}^{t} \mu(x) dx] d\tau} {(k-1)!}. \nonumber
\end{align}
Using integration by parts, we can rewrite the above ex\-pression as
\begin{equation}
 f_k(t) = 1 - \frac{e^{-\int_{0}^{t} \lambda(x) dx} \Gamma[k,\int_{0}^{t} \mu(x) dx]}{(k-1)!} 
- \frac{\int_{0}^{t} (e^{-\int_{\tau}^{t} \lambda(x) dx})\frac{\partial \Gamma[k,\int_{\tau}^{t} \mu(x) dx]}{\partial \tau} d\tau}{(k-1)!} \nonumber
\label{eq:expand $f_k(t)$}
\end{equation}
Lemma~\ref{lem:convexity-exp} tells us that $e^{-\int_{0}^{t} \lambda(x) dx}$ and $e^{-\int_{\tau}^{t} \lambda(x) dx}$ are convex with
respect to $\lambda(\cdot)$. 
Moreover, using Lemma~\ref{lem:convexity-integral} and the fact that $\frac{\partial \Gamma[k,\int_{\tau}^{t} \mu(x) dx]}{\partial \tau} > 0$,
it follows that the function ${\int_{0}^{t} (e^{-\int_{\tau}^{t} \lambda(x) dx})\frac{\partial \Gamma[k,\int_{\tau}^{t} \mu(x) dx]}{\partial \tau} d\tau}$ is 
convex. 
Finally, given that $\Gamma[k,\int_{0}^{t}  0] > 0$, we can conclude that $f_k(t)$ is concave with respect to $\lambda(\cdot)$. 

\begin{lemma}
\label{lem:convexity-exp}
Functional $J[\lambda] = e^{- \int_a^t \lambda(x) dx}$ is convex with respect to $\lambda(\cdot)$ for any constant $a \leq t$.
\end{lemma}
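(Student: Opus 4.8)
The plan is to recognize $J$ as the composition of the one–dimensional convex function $\exp$ with a \emph{linear} functional of $\lambda$, and then invoke the elementary fact that convexity is preserved under such a composition. Concretely, define $L[\lambda] := \int_a^t \lambda(x)\,dx$. By linearity of the integral, $L$ is a linear functional on $\Hcal$, so in particular $L[\alpha g_1 + (1-\alpha) g_2] = \alpha L[g_1] + (1-\alpha) L[g_2]$ for all $g_1, g_2 \in \Hcal$ and $\alpha \in (0,1)$.

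The main step is then a single inequality. Fix $g_1, g_2 \in \Hcal$ and $\alpha \in (0,1)$, and set $u = -L[g_1]$ and $v = -L[g_2]$. Using linearity of $L$, \[ J[\alpha g_1 + (1-\alpha) g_2] = e^{-L[\alpha g_1 + (1-\alpha) g_2]} = e^{\alpha u + (1-\alpha) v}. \] The weighted arithmetic–geometric mean inequality—equivalently, convexity of $x \mapsto e^{x}$ on $\RR$—gives $e^{\alpha u + (1-\alpha) v} \le \alpha e^{u} + (1-\alpha) e^{v} = \alpha J[g_1] + (1-\alpha) J[g_2]$, which is precisely the defining inequality~\eqref{eq:convexity-def} for convexity of $J$.

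There is essentially no obstacle here: the only things to verify are the linearity of $\lambda \mapsto \int_a^t \lambda(x)\,dx$, which is immediate, and the one–variable convexity of $\exp$, which is standard; the hypothesis $a \le t$ merely fixes the orientation of the interval of integration and is not otherwise used. If one wants to be completely explicit about what ``convex functional on $\Hcal$'' means, one can alternatively argue via the second variation: for a fixed perturbation $\eta$, the scalar function $\phi(s) = J[\lambda + s\eta] = e^{-L[\lambda] - sL[\eta]}$ satisfies $\phi''(s) = L[\eta]^2 e^{-L[\lambda] - sL[\eta]} \ge 0$, so $J$ is convex along every line, hence convex. This lemma, together with the companion Lemma~\ref{lem:convexity-integral} on preservation of convexity under nonnegatively weighted integration, is exactly what is consumed in the proof of Theorem~\ref{theo:concave-f-k}.
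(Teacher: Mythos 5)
Your proposal is correct and follows essentially the same route as the paper's proof: both exploit linearity of $\lambda \mapsto \int_a^t \lambda(x)\,dx$ and then apply the weighted arithmetic--geometric mean inequality (equivalently, convexity of $\exp$) to obtain the defining inequality of convexity. Your explicit framing via the composition with a linear functional, and the optional second-variation check, are just slightly more detailed packagings of the same argument.
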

\begin{proof}
We simply verify that $J[\lambda]$ satisfies the definition of convexity, as given by Eq.~\ref{eq:convexity-def}:
\begin{equation}
J[\alpha \lambda_1 + (1-\alpha) \lambda_2 ] = e^{- \int_a^t \alpha \lambda_1(x) + (1-\alpha) \lambda_2(x) \, dx}
\le \alpha e^{- \int_a^t  \lambda_1(x)\, dx}+  (1-\alpha) e^{- \int_a^t  \lambda_2(x) \, dx} 
= \alpha J[\lambda_1] + (1-\alpha) J[\lambda_2] \nonumber
\end{equation}
where the inequality follows from the arithmetic-geometric mean inequality, \ie, $\theta x + (1-\theta) y \ge x^{\theta} y^{1-\theta}$ for all positive $x$, $y$, and $0 < \theta < 1$.
\end{proof}

\begin{lemma}
\label{lem:convexity-integral}
If the functional $J_t[\lambda(\cdot)] $ is convex with respect to $\lambda(\cdot)$. Then, given any arbitrary function $g(x) \ge 0$, the functional 
$L[\lambda] = \int_0^t J_{\tau}[\lambda(.)] g(\tau) d\tau $ is also convex with respect to $\lambda(\cdot)$.
\end{lemma}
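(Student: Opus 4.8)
The plan is to verify the defining inequality for convexity of a functional (Eq.~\ref{eq:convexity-def} with the inequality reversed, since ``$L$ convex'' means ``$-L$ concave'') by a direct, pointwise-in-$\tau$ argument, using that convexity is preserved first under multiplication by a nonnegative scalar and then under integration against a nonnegative weight. Concretely, I would fix arbitrary $\lambda_1, \lambda_2 \in \Hcal$ and $0 < \alpha < 1$, and abbreviate $\lambda_\alpha := \alpha \lambda_1 + (1-\alpha)\lambda_2$.

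First I would invoke the hypothesis: for every fixed $\tau \in [0,t]$ the functional $\lambda \mapsto J_\tau[\lambda(\cdot)]$ is convex, hence $J_\tau[\lambda_\alpha] \le \alpha J_\tau[\lambda_1] + (1-\alpha) J_\tau[\lambda_2]$ holds pointwise in $\tau$. Since $g(\tau) \ge 0$, multiplying through by $g(\tau)$ preserves the inequality, giving $J_\tau[\lambda_\alpha]\,g(\tau) \le \alpha J_\tau[\lambda_1]\,g(\tau) + (1-\alpha) J_\tau[\lambda_2]\,g(\tau)$ for all $\tau \in [0,t]$.

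Next I would integrate this inequality over $\tau \in [0,t]$; by monotonicity and linearity of the integral one obtains $\int_0^t J_\tau[\lambda_\alpha]\,g(\tau)\,d\tau \le \alpha \int_0^t J_\tau[\lambda_1]\,g(\tau)\,d\tau + (1-\alpha)\int_0^t J_\tau[\lambda_2]\,g(\tau)\,d\tau$, which is exactly $L[\lambda_\alpha] \le \alpha L[\lambda_1] + (1-\alpha) L[\lambda_2]$. This establishes convexity of $L$ with respect to $\lambda(\cdot)$, which is what is claimed, and is precisely the form in which the lemma is used in the proof of Theorem~\ref{theo:concave-f-k} (with $g(\tau) = \partial_\tau \Gamma[k,\int_\tau^t \mu]$, a nonnegative weight).

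Since the argument reduces to monotonicity of the integral, there is essentially no obstacle; the only point needing (routine) care is that the integrals defining $L[\lambda_1]$, $L[\lambda_2]$, $L[\lambda_\alpha]$ are well defined and finite — which holds in our setting because the relevant $J_\tau$ (the exponential functionals in the proof of Theorem~\ref{theo:concave-f-k}) are bounded and continuous in $\tau$ on the compact interval $[0,t]$ and $g$ is integrable. If one wished to make ``convexity is preserved under integration'' fully explicit rather than appealing to monotonicity of the integral directly, one could instead write the integral as a limit of Riemann sums $\sum_i J_{\tau_i}[\lambda(\cdot)]\,g(\tau_i)\,\Delta\tau_i$, each of which is a nonnegative combination of the convex functionals $J_{\tau_i}$ and hence convex, and then use that a pointwise limit of convex functionals is convex.
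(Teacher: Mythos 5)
Your proposal is correct and follows essentially the same route as the paper's own proof: apply the pointwise convexity inequality for $J_\tau$, multiply by the nonnegative weight $g(\tau)$, and integrate, using monotonicity and linearity of the integral. The extra remarks on well-definedness and the Riemann-sum viewpoint are fine but not needed beyond what the paper already does.
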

\begin{proof}
We verify that the functional $L[\lambda] = \int_0^t J_{\tau}[\lambda(.)] g(\tau) d\tau $ verifies the definition of convexity, as given
by Eq.~\ref{eq:convexity-def}:
\begin{align}
L[\alpha \lambda_1 + (1-\alpha) \lambda_2 ] &= \int_0^t J_{\tau}[\alpha \lambda_1 + (1-\alpha) \lambda_2] g(\tau) d\tau
\le  \alpha \int_0^t J_{\tau}[\lambda_1] g(\tau) d\tau
+ (1-\alpha) \int_0^t J_{\tau}[\lambda_2] g(\tau) d\tau \nonumber \\ 
&= \alpha L[\lambda_1] + (1-\alpha) L[\lambda_2] \nonumber
\end{align}
where the inequality holds using that, given any two arbitrary functions $h_1$ and $h_2$ such that $h_1(x) \ge h_2(x) \ge 0$ for all $x \in \Dcal$, then 
$\int_{\Dcal} h_1(x) g(x) dx \ge \int_{\Dcal} h_2(x) g(x) dx$ given $g(x) \ge 0$ for all $x \in \Dcal$.
\end{proof}

\section{Proof of Theorem 4} \label{sup:concave-expected-T}

Theorem~\ref{theo:concave-f-k} proves the concavity of $f_k(t)$ with respect to $\lambda(\cdot)$. 
Therefore, $1-f_k(t)$ is convex and, using Lemma~\ref{lem:convexity-integral}, it holds that $\int_0^T (1-f_k(t)) s(t) dt = \int_0^T s(t) dt - \int_0^T f_k(t) s(t) dt$ is also convex. 
Then, since $\int_0^T s(t) dt$ is constant, $\int_0^T f_k(t) s(t) dt$ is concave with respect to $\lambda(\cdot)$ and the proof is complete.

\section{Proof of Lemma 6} \label{sup:lem-limit}
Each piecewise continues function can be represented as summation of a number of \emph{heaviside} step functions. The count is equal to the number of discontinuity points. However,  each heaviside function itself is the limit of smooth \emph{tanh} functions. Therefore, the piecewise continues function will be the limit of a finite summation of smooth \emph{tanh} functions.

\section{Proof of Theorem 7} \label{app:piecewise-concave}
Consider two piecewise constant  functions $\lambda(\cdot) ,\mu(\cdot) \in \Gcal$.
According to Lemma~\ref{lem:sequence} there exist sequence of smooth functions such that $\lim_{n \to \infty} \lambda_n = \lambda$  and $\lim_{n \to \infty} \lambda'_n = \lambda'$. Because of the concavity 
of $f_k$ in $\Hcal$ we know for $ 0 < \alpha < 1$:
\begin{align*}
f_k [\alpha \lambda_n(\cdot) + (1-\alpha) \lambda'_n(\cdot)]  \ge \alpha f_k[\lambda_n(\cdot)] + (1-\alpha) f_k[\lambda'_n(\cdot)].
\end{align*}
Taking the limit and using the continuity of $f_k$ we get:
\begin{equation}
f_k [\alpha \lambda(\cdot) + (1-\alpha) \lambda'(\cdot)]  \ge \alpha f_k[\lambda(\cdot)] + (1-\alpha) f_k[\lambda'(\cdot)].
\end{equation}
Accompanied with convexity of space $\Gcal$ the theorem is proved.

\end{appendix}

\end{document}